\begin{document}
\begin{frontmatter}

\title{Chaos in DNA Evolution}

\author[femto,ufc]{Jacques M. Bahi}
\ead{jacques.bahi@univ-fcomte.fr}

\author[femto,ufc]{Christophe Guyeux\corref{cor1}}
\ead{christophe.guyeux@univ-fcomte.fr}

\author[chrono,ufc]{Antoine Perasso\corref{cor1}}
\ead{christophe.guyeux@univ-fcomte.fr}

\address[chrono]{Chrono-environnement laboratory, UMR 6249 CNRS} 
\address[femto]{FEMTO-ST Institute, UMR 6174 CNRS} 
\address[ufc]{University of Franche-Comt\'{e}, Besan\c con, France}


\begin{abstract}
In this paper, we explain why the chaotic model (CM) of Bahi and Michel (2008) accurately simulates gene mutations over time. First, we demonstrate that the CM model is a truly chaotic one, as defined by Devaney. Then, we show that mutations occurring in gene mutations have the same chaotic dynamic, thus making the use of chaotic models relevant for genome evolution.
\end{abstract}
\begin{keyword}
Genome evolution models, Mutations, Mathematical topology, Devaney's chaos
\end{keyword}
\end{frontmatter}

\section{Introduction}
Codons are not uniformly distributed in the genome.
Over time, mutations have introduced some variations in their
frequency of apparition.
It can be attractive to study the genetic patterns (blocs of more than one nucleotide: dinucleotides, trinucleotides...) that appear and disappear depending on mutation parameters.
Mathematical models allow the prediction of such an evolution, in such a way
that statistical values observed in current genomes can be recovered
from hypotheses on past DNA sequences.
A first model for genome evolution was proposed in 1969 by Thomas
Jukes and Charles Cantor \cite{Jukes69}. This first model is very simple,
as it supposes that each nucleotide $A,C,G,T$ has the probability $m$ to
mutate to any other nucleotide, as described in the following mutation
matrix,
$$
\left(\begin{array}{cccc}
1-3m & m & m & m\\
m & 1-3m & m & m\\
m & m & 1-3m & m\\
m & m & m & 1-3m\\
\end{array}\right)
$$
In this matrix, the coefficient in row 3, column 2 represents the
probability that the nucleotide $G$ mutates in $C$ during the next time
interval, \emph{i.e.}, $P(G \rightarrow C)$.
This first attempt has been followed up by Motoo Kimura \cite{Kimura80},
who has reasonably considered that transitions ($A \longleftrightarrow G$ and
$T \longleftrightarrow C$) should not have the same mutation rate as
transversions ($A \longleftrightarrow T$, $A \longleftrightarrow C$, $T
\longleftrightarrow G$, and $C \longleftrightarrow G$), leading to the
following mutation matrix.
$$
\left(\begin{array}{cccc}
1-a-2b & b & a & b\\
b & 1-a-2b & b & a\\
a & b & 1-a-2b & b\\
b & a & b & 1-a-2b\\
\end{array}\right)
$$
This model was refined by Kimura in 1981 (three
constant parameters, to make a distinction between natural
$A\longleftrightarrow T$, $C \longleftrightarrow G$ and unnatural
transversions), Joseph Felsenstein \cite{Felsenstein1980}, Masami Hasegawa,
Hirohisa Kishino, and Taka-Aki Yano \cite{Hasegawa1985} respectively.
The differences between these models are in the number of parameters they use, but
\emph{all of the latter manipulate constant parameters}.
However, they all are rudimentary as they only allow to study nucleotide evolution, not genetic patterns mutations.
From 1990 to 1994, Didier Arquès and Christian Michel proposed models
based on the RY purine/pyrimidine alphabet \cite{Arques1990,Arques1990a,Arques1992,Arques1993a,Arques1993b,Arques1994}.
These models have been abandoned by their own authors in favor of models
over the $\{A,C,G,T\}$ alphabet.
More precisely, in 1998 Didier Arquès, Jean-Paul Fallot, and Christian Michel
proposed a first evolutionary model on the  $\{A,C,G,T\}$
alphabet that is based on trinucleotides~\cite{Arques1998}. With such a
model, the mutation matrix now has a size of $64 \times 64$ (there are 64
trinucleotides). This model comprises 3 parameters $p,q,r$ that correspond,
for a given trinucleotide $XYZ$, to the probability $p$ of mutation of the
first nucleotide $X$, the mutation probability $q$ of $Y$, and the
probability $r$ that $Z$ mutates.
As for the nucleotides based models, this new approach
has only taken into account constant parameters.
In 2004, Jacques M. Bahi and Christian Michel published novel
research work in which the 1998 model was improved by replacing constants parameters by new parameters dependent on time \cite{Bahi2004}.
In this way, it has been possible to simulate a gene evolution that is non-linear.
However, the following years, these researchers returned to models embedding constant parameters, probably due to the fact that the 2004 model leads to poor results: only one of the twelve studied cases allows to recover values that are close to reality.
For instance, in 2006, Gabriel Frey and Christian Michel proposed a model that uses 6 constant parameters \cite{Frey2006a}, whereas in 2007, Christian Michel used a model with 9 constant parameters that generalize those of 1998 and 2006~\cite{Michel2007c}.
Finally, Jacques M. Bahi and Christian Michel have recently introduced
in~\cite{Bahi2008}, a last model with 3 constant parameters,
but \emph{whose evolution matrix evolves over time}.
In other words, trinucleotides that have to mutate (modifying trinucleotide content without changing their location) are not fixed, but they are randomly picked among a subset of potentially mutable trinucleotides.
This model, called ``chaotic model'' (CM), allows good recovery of various
statistical properties detected in the genome.
Furthermore, this model fits well with the hypothesis of some primitive
genes that have mutated over time.

In this paper, we ask why the CM model yields such good results. Obviously,
it is reasonable to assume that not all of the trinucleotides have to mutate each
time as, for instance, the stop codons that have very small
mutation probabilities.
However, such a biological claim is not sufficient to explain the success
of the CM model to accurately simulate the dynamics of
mutations in genomes.
Our proposal is that \emph{the dynamics of genomes evolution is indeed chaotic},
as defined by the Devaney's formulation~\cite{devaney,Banks92}. This is why linear
non-chaotic models of  evolution are far from what they attempt
to model, leading to a poor accuracy in their prediction.
By contrast, we have recently established that discrete dynamical systems
in chaotic iterations satisfy Devaney's definition of chaos~\cite{bahi2010hash}.
Thus the CM model, which is the first mutation model based on chaotic iterations~\cite{Bahi2008} (considering that the set of trinucleotides that can possibly mutate evolve over time), uses a chaotic
dynamical system to describe a chaotic
behavior, leading to a model of the same nature as the phenomenon under study.
We finally demonstrate that, in contrast to inversions, mutations occurring in
genomes have a chaotic dynamics. So at least one type of genomes
reorganization process is chaotic, according to the formulation of such a behavior in the mathematical theory of chaos.

The remainder of this research work is organized as follows.
In Section~\ref{sec:CMModel}, 
the CM model of genomes evolution is 
recalled and its performances are synthesized.
Then, in the next section, basic recalls concerning chaotic iterations and
Devaney's chaos are given.
Genomics mutations are formalized through a discrete dynamical system 
and studied in Section \ref{sec:genmut}. In particular, they are proven to be 
chaotic according to Devaney. 
Other categories of genomics
rearrangements are investigated too, namely transpositions and inversions.
This research work ends with a conclusion section, where the contribution
is summarized and intended future work is listed.

\section{The CM Model of Genome Evolution}
\label{sec:CMModel}
In this section, the CM model is presented, its capability to reasonably approximate mutations into genomes is recalled, and its relationship with chaotic iterations is stated.

\subsection{Gene mutations shown as chaotic iterations}
When considering the model of 2007 with 9 constant parameters~\cite{Michel2007c} that generalizes the models of 1998 and 2006 (\cite{Arques1998} and \cite{Frey2006a} respectively), \emph{all of the trinucleotides have to mutate at each time}.
These models do not take into account the low mutability of the stop codons.
Furthermore, they do not allow mutation strategies to be applied to certain given codons, while the other codons do not mutate.
This is why the model with 3 constant parameters and a chaotic strategy has been proposed in \cite{Bahi2008,Bahi2008a}.
In this model, the set of trinucleotides is divided into two subsets at each time $t$: the first one comprises trinucleotides that can possibly mutate at time $t$, whereas in the second set, trinucleotides cannot change at the considered time.
The trinucleotides that mutate with replacement at time $t$ are randomly picked following a uniform distribution on the set of all possible subsets of trinucleotides (other distribution of probabilities like discrete Poisson process have not been regarded by these authors).
Consequently, the size and the constitution of the subset of mutable trinucleotides change at each time $t$. This subset is denoted by $J(t)$, and this new model has been called ``chaotic model'' CM by the authors of \cite{Bahi2008,Bahi2008a}, as opposed to the former ``standard model'' of 1998~\cite{Arques1998}, due to its relationship with the
chaotic iterations recalled below.
Since the trinucleotides that do not mutate in the chaotic model are not derived from the mutation of other trinucleotides (as, at each iteration, we focus only on the subset of trinucleotides that are allowed to mutate at the considered time), their probabilities of occurrence are constant. Conversely, mutation parameters of the mutable trinucleotides are the same as the 1998 model: $p$, $q$, and $r$ with $p+q+r=1$, for each of the three nucleotide sites.
Let $P_i(t)$ the probability of occurrence of the trinucleotide $i$ at time $t$. Let $A^{(t)}$ be the mutation matrix at time $t$, whose element $(i,j)$ is $P^{(t)}(i \rightarrow j)$: the probability that the $i-$th trinucleotide (ordered in lexicla order) mutates into the $j-$th one. For instance, in line 1 and column 2, there is $P^{(t)}(1 \rightarrow 2) = P^{(t)}(AAA \rightarrow AAC)$. The previous remarks lead to the following formulation:
$$
\left\{
\begin{array}{ll}
P_i'(t) = 0  & \text{if } i \notin J(t),\\
P_i'(t) = \displaystyle{\sum_{j=1}^{64}} (A^{(t)}-I)_{ji}P_j(t) & \text{if } i \in J(t).
\end{array}
\right.
$$
Obviously, this new model is a generalization of the 1998 version. Indeed, if we suppose that $A^{(t)} = A$ for every $t$, then denoting $J(t)$ is the set of all the trinucleotides at time $t$, the latter system can be summarized to its second line, which is exactly the 1998 model.
As the number of mutable trinucleotides changes over time, the mutation matrix is not constant, which leads to the fact that the resolution method used in the standard model cannot be applied here.
To solve the system, authors of \cite{Bahi2008,Bahi2008a} have considered discrete times small enough to be sure that the mutation matrix does not change between $t_i$ and $t_{i+1}$, where the length of $[t_i, t_{i+1}]$ is small enough compared to the mutation rate.
Let $A^{(k)}$ be the (constant) mutation matrix during the time interval $[t_{k-1}, t_k]$.
To compute $P_i'(t_{k-1})$, authors of \cite{Bahi2008,Bahi2008a} have considered that:
$$\dfrac{d~P_i(t_{k-1})}{dt} = \dfrac{P_i(t_k) - P_i(t_{k-1})}{h},$$
\noindent where $h=t_k-t_{k-1}$ is supposed small and constant.
By putting this formula into the previous system, these authors have finally obtained:
$$
\left\{
\begin{array}{ll}
P_i(t_k) = P_i(t_{k-1})  & \text{if } i \notin J(t_k),\\
P_i(t_k) = h \displaystyle{\sum_{j=1}^{64}} (A^{(k)}-I)_{ji}P_j(t_{k-1}) + P_i(t_{k-1}) & \text{if } i \in J(t_k).
\end{array}
\right.
$$
This model has been called the ``discrete time chaotic evolution model CM'' in \cite{Bahi2008,Bahi2008a}.
We will show that this discrete version is, indeed, a gene evolution model that uses chaotic iterations.
To understand the interest of this discrete time chaotic evolution model, we must firstly recall the discovery by Michel \emph{et al.} of a $C^3-$code and its properties~\cite{ArquesMichel1996}.

\subsection{Relevance of the CM model}
A computation of the frequency of each trinucleotide in the 3 frames of genes, in a large gene population (protein coding region) of both eukariotes and prokaryotes, it was established in 1996 that the distribution of trinucleotides in these frames is not uniform~\cite{ArquesMichel1996}.
Such a surprising result has led to the definition of 3 subsets of trinucleotides, denoted by $X_0$, $X_1$, and $X_2$.
These sets are defined as follows.
For each of the 60 trinucleotides different from $AAA$, $TTT$, $CCC$, $GGG$, computes its frequency in
the reading frame $R_0$, in the frame $R_1$ obtained by a shift of 1 nucleotide to the left of $R_0$,
and in the frame $R_2$ obtained by a shift of 2 nucleotides.
If the considered trinucleotide is more frequent in $R_0$ (resp. $R_1, R_2$),
put it in $X_0$ (resp. $X_1, X_2$).
This procedure is repeated, with small variations, until $X_0$, $X_1$, and $X_2$ are respectively made-up of 20 trinucleotides.
These sets are linked by the following  permutation property: $X_1 = \{\mathcal{P}(t), t \in X_0 \}$, $X_2 = \{\mathcal{P}(t), t \in X_1 \}$, where $\mathcal P$ is defined for all trinucleotide $t=n_0n_1n_2$ by $\mathcal{P}(t)=n_1n_2n_0$.
Additionally, if we denote $c:\mathcal{N} \longrightarrow \mathcal{N}$ the \emph{complementary function} defined on the set of nucleotides $\mathcal{N} = \{A,T,C,G\}$ by: $c(A)=T$, $c(T)=A$, $c(C)=G$,
$c(G)=C$, and for all words of nucleotides $u$ and $v$, $c(uv)=c(v)c(u)$, then we have $c(X_0)=X_0$,
$c(X_1)=X_2$, and $c(X_2)=X_2$, which is referred to the ``complementarity property''.
More details about the research context, the constitution of these sets, and their properties ($C^3$ code, rarity, largest window length, higher frequency of ``misplaced'' trinucleotides, flexibility) can be found in~\cite{Bahi2008,Bahi2008a}.
Among other things, it has been proven that $X_0$ occurs with the highest probability (48.8\%) in genes (reading frames 0), whereas $X_1$ and $X_2$ occur mainly in the frames 1 and 2 respectively.
In other words, $X_0$ is not pure in the reading frame (its probability is less than 1): it is mixed with $X_1$ and $X_2$.
Such a property has been explained by authors of~\cite{Bahi2008,Bahi2008a} as follows.
Suppose that $X_0$ represents the set of trinucleotides used to build the gene of the last common ancestor of the considered set of species.
Random mutations have introduced noise during evolution, leading to a decreased probability of $X_0$ \cite{Bahi2008,Bahi2008a}.
Another fundamental property is asymmetry in the sense that codes $X_1$ and $X_2$ satisfy $P(X_1) < P(X_2)$.
The standard and chaotic models (with particular strategies for the stop codons) can explain both the decreased probability of the code $X_0$ and the asymmetry between the codes $X_1$ and $X_2$ in genes, by
the following procedure.
Construct the ``primitive'' genes, i.e., genes before random substitutions, with trinucleotides of the circular code $X_0$.
Starting from this initial condition, the systems (standard or chaotic) are launched, iterating their
processes until a stop condition is checked.
By doing so, and for rates chosen carefully, it is possible to be close to the current frequency of each of the three codes $X_0$, $X_1$, and $X_2$ in genes.
In this situation, CM models largely outperform the standard models, being closer to the observed
probabilities for $X_0$, $X_1$, and $X_2$ discussed above.
In particular, the chaotic model called ``$CM_{TAA}$'' with low mutability of the stop codon TAA, matches as much as possible the probability discrepancy between the circular codes $X_1$ and $X_2$ observed in reality.
For further details, the reader is referred to~\cite{Bahi2008,Bahi2008a}.

All the properties described before show that the gene mutation prediction is  
suitable to describe these phenomena. This kind of manifestation of chaos in genomics is somewhat surprising and needs,
in our opinion, to be further investigated, determining whether more
fundamental reasons can justify the success of chaotic models to well simulate genome evolution.
In the following section, we will propose some reasons explaining why chaos is related to genomes.
More precisely, we will show that some genome evolution mechanisms, as modeled in the present article, are chaotic according to Devaney. To achieve this goal, we first need to recall the bases of the mathematical theory of chaos.

\section{Basic Remainders}

Let us now rigorously introduce the notions of Devaney's chaos and of chaotic iterations, with their respective links.
\subsection{Devaney's chaotic dynamical systems}
\label{sec:Devaney}
Consider  a topological  space $(\mathcal{X},\tau)$  and  a continuous function $f : \mathcal{X} \to \mathcal{X}$.
\begin{definition}
\label{def:topologicalTransitivity}
Function $f$ is said to be \emph{topologically transitive} if,  for any pair of non empty open sets $U,V \subset \mathcal{X}$, there exists $k>0$ such that $f^k(U) \cap V \neq \emptyset$.
\end{definition}
\begin{definition}
The point $x \in \mathcal X$  is a \emph{periodic point} for
$f$  of period  $n\in \mathbb{N}^*$  if $f^{n}(x)=x$.
\end{definition}
\begin{definition}
Function $f$ is said to be \emph{regular} on $(\mathcal{X}, \tau)$ if the set of periodic points for $f$  is dense in $\mathcal{X}$: for any point $x$ in $\mathcal{X}$, any neighborhood  of $x$ contains at least one periodic point.
\end{definition}
\begin{definition}
Function $f$ is said  to be \emph{chaotic} on $(\mathcal{X},\tau)$  if $f$ is
regular and topologically transitive.
\end{definition}
In cases where the topology $\tau$ can be described by a metric $d$, the   chaos   property  is   strongly   linked   to   the  notion   of ``sensitivity'', defined on a metric space $(\mathcal{X},d)$ by:
\begin{definition}
\label{sensitivity} Function $f$ has \emph{sensitive dependence on initial conditions}
if there  exists $\delta >0$  such that, for any  $x\in \mathcal{X}$
and  any  neighborhood  $V$  of  $x$,  there  exists  $y\in  V$  and
$n\geqslant 0$  such that $d\left(f^{n}(x),  f^{n}(y)\right) >\delta
$.
\noindent Then $\delta$ is called the \emph{constant of sensitivity} of $f$.
\end{definition}
Indeed, Banks  \emph{et al.}  have proven in~\cite{Banks92}  that when
$f$ is chaotic on $(\mathcal{X}, d)$, then $f$ has
the  property  of sensitive  dependence  on  initial conditions  (this
property was formerly  an element of the definition  of chaos). To sum
up, quoting Devaney in~\cite{devaney}, a chaotic dynamical system ``is
unpredictable   because  of  the   sensitive  dependence   on  initial
conditions. It cannot be broken down or simplified into two subsystems
which do not interact because  of topological transitivity. And in the
midst  of this  random behavior,  we nevertheless  have an  element of
regularity''.  Fundamentally   different  behaviors  are  consequently
possible and occur in an unpredictable way.

\subsection{Chaotic Iterations}
\label{sec:chaotic iterations}
\begin{definition}
\label{Def:chaotic iterations}
Let $\mathcal{X}$ be a set, $\mathsf{N} \in \mathbb{N}^*$,
$f:\mathcal{X}^{\mathsf{N}}\longrightarrow  \mathcal{X}^{\mathsf{N}}$ be
a  function,  and  $S$  be  a  sequence of subsets of $\ldbrack 1;\mathsf{N}\rdbrack $ called a ``chaotic strategy''.  The  
\emph{chaotic iterations}     are   the sequence $(x^n)_{n\in\mathbb N}$ of elements of $\mathcal X^\mathsf{N}$  defined      by     $x^0\in
\mathcal{X}^{\mathsf{N}}$ and
\begin{equation*}    
\forall    n\in     \mathbb{N}^{\ast     },    \forall     i\in
\ldbrack1;\mathsf{N}\rdbrack ,x_i^n=\begin{cases}
x_i^{n-1} &  \text{ if  } i \notin S^n \\  
\left(f(x^{n-1})\right)_{i} & \text{ if } i \in S^n.
\end{cases}
\end{equation*}
\end{definition}
In other words, at the $n^{th}$ iteration, only the components of $S^{n}$ are
\textquotedblleft  iterated\textquotedblright .  Note  that the term  ``chaotic'' in  the name of  these iterations, has \emph{a priori} no link with the mathematical theory of chaos, which will be recalled in the next section.
However, it has been proven in~\cite{bg10:ij} that, \emph{for a large variety of
functions, chaotic iterations are indeed really chaotic}.

\section{Genomics Mutations as a Discrete Dynamical System}\label{sec:genmut}
\subsection{Presentation of the Problem}
We now ask whether the evolution
of a DNA sequence under evolution can be predicted or not.
In this section, we will more specifically focus on the following
questions. Firstly, given a genome (or any DNA
sequence) $G$ of
interest, and a more or less precise idea of
mutations that it will probably face in future
(for instance, some areas in the genomes are known to mutate
more frequently than other ones), is it possible
to infer a set of the more probable genomes
that can result, in the future, from this original
sequence $G$ after mutations? Second, given a
sequence known at the current generation
(say, at time $t^n$), is it
possible to determine what was the most probable
aspect of this sequence in the past (at time
$t^m, m<n$)? Thirdly, given two DNA sequences,
the second one being the result of some mutations
on the first one, is it possible to find the
mutation sequence that has changed the first
sequence in the second one (taking into account
the fact that a given nucleotide can mutate
several times).
Obviously, with no information about the mutation
rate and history of the considered DNA sequence,
this prediction is quite impossible. But what
happens if we can follow the DNA sequence over several
generations, learning by doing so information about
the possible form of its mutations sequence? For
instance, following a lineage of \emph{Escherichia
coli} during 40000 generations gives us a
lot of informations concerning the behavior of
mutations in the genomes of the considered
lineage. Is it possible to use this knowledge to
predict the genome of this lineage at generation number 45000 ?
In other words,
knowing the initial  DNA sequence $G^0$ at time
$t^0$ and the 40000 first terms of the mutations
sequence, can we predict the DNA sequence at time
$t^{45000}$?
With the knowledge of $G^0$ and the whole mutations
sequence  $S=(S^0, \hdots, S^{45000})$, the genome
$G^{45000}$ can be obtained without prediction, but
what happens to our ability to make a prediction
when using only the head
$(S^0, \hdots, S^{40000})$ of this sequence?
This head can be seen as an approximation of the
true mutations sequence $S$, and if the evolution
dynamics of the mutations is quite stable through
approximations, in the sense where a small perturbation 
at the origin yields a small perturbation at the end 
of the process, then this prediction makes sense.
To measure the stability of the mutation dynamics
through small errors or approximations, and the
capability to predict the evolution of genomes
under mutations, we must firstly write this
mutation operation as a dynamical system, provide
an accurate distance that corresponds to the
``approximation'' quoted below, and measure the
effects of our ignorance on the complete mutations
sequence on the prediction of genomes evolution.

\subsection{Formalization of DNA Mutation Evolution}
\label{formalization}
A genome having $\mathsf{N}$ nucleotides is formalized here as a sequence of
$\mathsf{N}$ integers belonging in $\left\{1,2,3,4\right\}$, where $1$
(resp. $2, 3,$ and $4$) refers to the adenine (resp. cytosine, guanine,
and thymine). The benefit of using integers $\left\{1,2,3,4\right\}$ instead of 
$\left\{A,C,G,T\right\}$ is justified by the construction of a metric for the 
mutation process (see Section \ref{subsec:metric}).
An evolution under nucleotide mutations of this genome is a sequence of
couples of $\ldbrack 1, \mathsf{N} \rdbrack \times \ldbrack 1,4
\rdbrack$, where we infer that:
\begin{itemize}
\item time has been divided into a sequence
$t^0, t^1, \hdots, t^n, \hdots$ such that at most
one mutation can occur between two time intervals,
\item the $i-$th couple of the mutation sequence is
equal to $(m,n)$ if and only if the $m-$th
nucleotide of the genome is replaced into the
nucleotide $n$. If the $m-$th nucleotide was $n$,
then no mutation has occurred at time $t^i$.
\end{itemize}
Such a sequence will be called ``mutations
sequence'' in the remainder of this document.
$\mathcal{S}_\mathsf{N} = \displaystyle{
\bigcup_{n \in \mathbb{N}} \left(\ldbrack 1,
\mathsf{N} \rdbrack \times \ldbrack 1, 4 \rdbrack \right)^n}$
will denote the (infinite) set of all possible
mutations (finite) sequences.
We introduce the phase space $\mathcal{X}_\mathsf{N} = \ldbrack 1, 4 \rdbrack^\mathsf{N}
\times \mathcal{S}_\mathsf{N}$ as the set of mutating genomes. It is constituted by couples
of points that store the information of a genome \emph{and} its future
evolution: the first coordinate of the couple is the current DNA sequence
whereas the second coordinate is the sequence of
mutations that will appear in the future (the problem
is that this sequence can only be, in
the best case, approximate concretely).
\begin{example}
\label{ex1}
For instance, the point $\left((1,1,2,1,3), \left((2,2),(2,3),(1,4) \right) \right)\in \mathcal{X}_5$
corresponds to the evolution $\left\{AACAG, ACCAG, AGCAG, TGCAG\right\}$:
the left coordinate $(1,1,2,1,3)$ means that we start with the sequence
$AACAG$, whereas the second coordinate $ \left((2,2),(2,3),(1,4) \right)$
explains that:
\begin{enumerate}
\item the first mutation $(2,2)$ is a substitution
of the second nucleotide by $C$,
\item the second mutation $(2,3)$ is a substitution of the second
nucleotide by $G$,
\item the third and last mutation $(1,4)$ refers to the substitution of the first nucleotide by $T$, which is designed here by $4$.
\end{enumerate}
\end{example}
Let us now introduce the \emph{initial} and \emph{shift} operators $i$ and $\sigma$ defined respectively by
\begin{equation*}
\begin{array}{cccc}
i: & \mathcal{S}_\mathsf{N} & \longrightarrow & \ldbrack 1, \mathsf{N} \rdbrack \times \ldbrack 1, 4 \rdbrack\\
 & (s^0, s^1, \hdots ) & \longmapsto & s^0
\end{array}
\end{equation*}
and
\begin{equation*}
\begin{array}{cccc}
\sigma : & \mathcal{S}_\mathsf{N} & \longrightarrow & \mathcal{S}_\mathsf{N}\\
 & (s^0, s^1, \hdots ) & \longmapsto & (s^1, s^2, \hdots).
\end{array}
\end{equation*}
The shift operator corresponds to the so-called symbolic dynamics, a well-studied mathematical example of chaotic dynamics~\cite{Formenti1998}.
With this material, the mutation operation $\mathcal M$ can be written as follows:
\begin{equation}
\label{defM}
\begin{array}{cccc}
\mathcal{M} : & \mathcal{X}_\mathsf{N} & \longrightarrow & \mathcal{X}_\mathsf{N}\\
 & \left(G_1, \hdots, G_\mathsf{N}), S\right) & \longmapsto & \left((G_1, \hdots,  G_{i(S)_1-1}, i(S)_2, G_{i(S)_1+1}, \hdots, G_\mathsf{N}),  \sigma(S)\right).
\end{array}
\end{equation}
In other words, the nucleotide at position $i(S)_1$ in the genome  $\left(G_1, \hdots, G_\mathsf{N}\right)$
is replaced by the nucleotide $i(S)_2$, and the first substitution $i(S)$ in the mutation
sequence $S$ is removed (as the mutation has already been achieved).
Thus the DNA evolution as the generations pass can finally be written
as the following discrete dynamical system:
\begin{equation}
\label{SDDmutation}
\left\{
\begin{array}{l}
X^0=(G^0, S) \in \mathcal{X}_\mathsf{N}\\
X^{n+1} = \mathcal{M}(X^n).
\end{array}
\right.
\end{equation}
\begin{example}
Let us consider Example~\ref{ex1} another time.
As stated before,
$X^0 = \left((1,1,2,1,3), \left((2,2),(2,3),(1,4)\right)\right) \in \mathcal{X}_5$.
Then $X^1= \mathcal{M}(X^0) =  ((1,2,2,1,3),$ $\left((2,3),(1,4)\right)) $,
$X^2= \mathcal{M}(X^1)
=  ((1,3,2,1,3),$ $\left((1,4)\right)) $,
and
$X^3= \mathcal{M}(X^2) =  ((4,3,2,1,3), \varnothing)$.
The last DNA sequence, obtained after 3 mutations (3 iterations of the dynamical system), is thus equal to $G^3 = X^3_1 = (4,3,2,1,3)$, which is $TGCAG$.
\end{example}
\subsection{A Metric for Mutation based Genomes Evolution}\label{subsec:metric}
A relevant metric must now be introduced in order to measure the correctness of the prediction, and to give consistency to the notion of approximation that has occurred several times in the previous section.
This distance must be defined on the set $\mathcal{X}_\mathsf{N}$, to measure how close is a predicted DNA evolution to the real one. It will be constructed as follows: given $X=(X_1,X_2),Y=(Y_1,Y_2) \in \mathcal{X}_\mathsf{N}$, the number $d(X,Y)$:
\begin{itemize}
\item has an integer part that computes the differences between the two DNA sequences $X_1$ (for instance, the predicted or approximated genome) and $Y_1$ (the real genome), that is, the number of nucleotides that do not correspond in the two genomes.
\item has a fractional part that must be as small as the evolution processes $X_2, Y_2$ will coincide for a long enough duration. More precisely, the $k-$th digit of $d(X,Y)$ will be equal to $0$ if and only if, after $k$ generations, the same position (nucleotide) will be changed in both $X_1$ and $Y_1$ genomes, and the same nucleotide is
inserted in each case.
\end{itemize}
Such requirements lead to the introduction of the
following function:
\begin{equation*}
\forall X,Y \in \mathcal{X}_\mathsf{N}, d(X,Y)
= d_G(X_1,Y_1) + d_S(X_2,Y_2)
\end{equation*}
where
\begin{equation*}
\begin{cases}
d_G(X_1,Y_1) = \displaystyle{\sum_{k=1}^{\mathsf{N}} \delta (X_1^k, Y_1^k)},\\
d_S(X_2, Y_2) = \displaystyle{\dfrac{9}{\mathsf{N}} \sum_{k=0}^\infty \dfrac{\mathcal F(X_2^k-Y_2^k)}{10^{k+1}}},
\end{cases}
\end{equation*}
where $\delta$ is the discrete metric on $\mathbb R$, that is, for $x,y\in\mathbb R$, $\delta(x,y) = \begin{cases}
1 \text{ if } x\neq y, \\
0 \text{ else,}
\end{cases}$ and $\mathcal F:\mathbb R^2 \longrightarrow \mathbb{R}^+$ is given by $\mathcal F(x_1,x_2) = |x_1| + \delta(0,x_2)$.
\begin{proposition}
Function $d$ is a metric on $\mathcal{X}_\mathsf{N}$.
\end{proposition}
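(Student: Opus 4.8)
The plan is to verify the four metric axioms for $d$, taking advantage of the fact that $d$ decomposes as $d = d_G + d_S$, where $d_G$ involves only the first (genome) coordinate and $d_S$ only the second (mutation-sequence) coordinate. It therefore suffices to show that $d_G$ is a metric on $\ldbrack 1,4\rdbrack^{\mathsf{N}}$, that $d_S$ is a metric on $\mathcal{S}_{\mathsf{N}}$, and then to invoke the elementary fact that if $d_A$ is a metric on $A$ and $d_B$ a metric on $B$, then $\big((a,b),(a',b')\big)\mapsto d_A(a,a')+d_B(b,b')$ is a metric on $A\times B$ (each axiom for the sum follows termwise from the corresponding axioms for $d_A$ and $d_B$).

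For $d_G$: since $\delta$ is the discrete metric on $\mathbb{R}$, every summand $\delta(X_1^k,Y_1^k)$ is non-negative, symmetric, equal to $0$ exactly when $X_1^k = Y_1^k$, and satisfies the triangle inequality; a finite sum over the $\mathsf{N}$ coordinates preserves all four properties, so $d_G$ is precisely the Hamming metric on $\ldbrack 1,4\rdbrack^{\mathsf{N}}$, and in particular $d_G(X_1,Y_1) = 0 \iff X_1 = Y_1$.

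For $d_S$: the key observation is that, writing $u = (m_1,n_1)$, $v = (m_2,n_2) \in \ldbrack 1,\mathsf{N}\rdbrack\times\ldbrack 1,4\rdbrack$, the quantity $\rho(u,v) := \mathcal{F}(u - v) = |m_1-m_2| + \delta(0,n_1-n_2)$ (with $u-v$ the componentwise difference in $\mathbb{Z}^2$, as in the paper's own abuse of notation) is itself a metric on $\ldbrack 1,\mathsf{N}\rdbrack\times\ldbrack 1,4\rdbrack$: it is the sum of the usual metric on $\ldbrack 1,\mathsf{N}\rdbrack\subset\mathbb{R}$ and the discrete metric on $\ldbrack 1,4\rdbrack$ (note $\delta(0,n_1-n_2)=\delta(n_1,n_2)$). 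The only point requiring a line is the triangle inequality for the second term, i.e. the subadditivity $\delta(0,b_1+b_2)\leqslant\delta(0,b_1)+\delta(0,b_2)$, which holds because $b_1+b_2\neq 0$ forces $b_1\neq 0$ or $b_2\neq 0$. Now $d_S(X_2,Y_2) = \frac{9}{\mathsf{N}}\sum_{k\geqslant 0} 10^{-(k+1)}\rho(X_2^k,Y_2^k)$ is a series with strictly positive coefficients; since $\rho\leqslant \mathsf{N}$ on its domain, it converges (in fact $d_S\leqslant 1$), and a convergent positively-weighted sum of copies of the metric $\rho$ applied coordinate by coordinate is again a metric — non-negativity, symmetry and the triangle inequality hold termwise, and $d_S(X_2,Y_2)=0$ forces $\rho(X_2^k,Y_2^k)=0$ for every $k$, hence $X_2 = Y_2$.

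Combining the two parts via the product construction yields that $d$ is a metric on $\mathcal{X}_{\mathsf{N}} = \ldbrack 1,4\rdbrack^{\mathsf{N}}\times\mathcal{S}_{\mathsf{N}}$. I expect the only genuinely delicate point to be bookkeeping rather than analysis: elements of $\mathcal{S}_{\mathsf{N}}$ are \emph{finite} sequences, so $X_2^k$ need not be defined for all $k$, and one must fix a convention — for instance padding a terminated mutation sequence with a neutral symbol on which $\rho$ still vanishes only against itself and remains bounded — so that $\rho(X_2^k,Y_2^k)$ is meaningful for every $k$ and $d_S$ still separates points of $\mathcal{S}_{\mathsf{N}}$. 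With such a convention in place the defining series is eventually zero, convergence is automatic, and the arguments above go through unchanged.
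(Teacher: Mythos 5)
Your proof is correct and follows essentially the same route as the paper's: decompose $d$ into $d_G$ (the Hamming metric) and $d_S$, with the only nontrivial step being the subadditivity of $\mathcal{F}$, which you establish by exactly the same observation about $\delta(0,\cdot)$. Your extra remark about padding the finite sequences of $\mathcal{S}_{\mathsf{N}}$ so that $X_2^k$ is defined for all $k$ is a legitimate point of care that the paper passes over silently, but it does not change the substance of the argument.
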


\begin{proof}
The function $d_G$ is clearly a metric on $\ldbrack 1,4\rdbrack^\mathsf{N}$ as being the 1-product metric of the $\mathsf{N}$ metric spaces $(\ldbrack 1,4\rdbrack,\delta)$.
We now prove that $d_S$ is a metric. Firstly, $d_S$ is well defined since for $(X_2,Y_2) \in \mathcal S_\mathsf{N}$, one gets $\mathcal F(X_2^k-Y_2^k) \leq \mathsf{N}+2$ for every $k\in\mathbb N$, implying the convergence of the series in the definition of $d_S$. Coincidence axiom and symmetry being obvious, we only prove the subadditivity of $f_S$. If $x_2,y_2\in\mathbb R$ are such that $\delta(0,x_2-y_2) = 1$ then $x_2\neq y_2$ and for every $z_2\in\mathbb R$, either $x_2\neq z_2$ or $y_2\neq z_2$ and so $\delta(0,x_2-z_2) + \delta(0,z_2-y_2) \geq 1$. Consequently for every $x,y,z\in\mathbb R^2$, $\mathcal F(x - y) \leq \mathcal F(x-z) + \mathcal F(z-y)$. The series being convergent for every $X_2,Y_2\in\mathcal S_\mathsf{N}$, one deduces that $d_G$ satisfies the subadditivity property on $\mathcal S_\mathsf{N}$ and is a metric on this set. As a consequence, $d$ is a metric on $\mathcal X_\mathsf{N}$.
\end{proof}

\subsection{The Topological Study of Mutations}

\subsubsection{Continuity}

Let us start by proving that,
\begin{proposition}
The mutation operation $\mathcal{M}$ is a continuous function on $(\mathcal{X}_\mathsf{N}, d)$.
\end{proposition}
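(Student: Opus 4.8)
The plan is to verify continuity directly from the $\varepsilon$--$\delta$ characterization at an arbitrary point $X = (X_1, X_2) \in \mathcal{X}_\mathsf{N}$, exploiting two structural features of $d = d_G + d_S$. First, $d_G$ takes its values in $\{0, 1, \ldots, \mathsf{N}\}$, so it behaves like a rescaled discrete metric: any two genomes closer than $1$ are equal. Second, $d_S$ is essentially a base-$10$ valuation on mutation sequences, so prescribing the first few terms of such a sequence amounts to prescribing the leading digits of $d_S$. Combined, this means that a sufficiently small ball around $X$ forces every $Y$ in it to have the \emph{same} current genome and the \emph{same} first scheduled mutation as $X$, after which $\mathcal{M}$ acts on the two points in lockstep on the genome coordinate and merely shifts the mutation coordinate.

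Concretely, I would establish three elementary facts. (i) If $d(X,Y) < 1$ then $d_G(X_1, Y_1) < 1$, and since $d_G$ is integer-valued this forces $X_1 = Y_1$. (ii) If $d_S(X_2, Y_2) < \tfrac{9}{10\mathsf{N}}$, then the $k=0$ term of the series defining $d_S$ must vanish, i.e. $\mathcal{F}(X_2^0 - Y_2^0) = 0$ (any nonzero value of $\mathcal{F}$ on a difference of pairs is $\geq 1$), which gives $X_2^0 = Y_2^0$ and in particular $i(X_2) = i(Y_2)$. (iii) A re-indexing of the series shows $d_S(\sigma(X_2), \sigma(Y_2)) = 10 \sum_{k \geq 1} \tfrac{9}{\mathsf{N}} \tfrac{\mathcal{F}(X_2^k - Y_2^k)}{10^{k+1}} \leq 10\, d_S(X_2, Y_2)$, i.e. the shift expands $d_S$ by at most a factor $10$. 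Granting (i) and (ii), the definition of $\mathcal{M}$ in \eqref{defM} shows that $\mathcal{M}(X)$ and $\mathcal{M}(Y)$ have \emph{identical} first coordinates (same genome, same position $i(X_2)_1$ overwritten by the same nucleotide $i(X_2)_2$), so $d_G$ contributes $0$ to $d(\mathcal{M}(X), \mathcal{M}(Y))$; and their second coordinates are $\sigma(X_2)$ and $\sigma(Y_2)$, controlled by (iii).

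Putting this together, given $\varepsilon > 0$ I would take $\delta = \min\!\left(1,\ \tfrac{9}{10\mathsf{N}},\ \tfrac{\varepsilon}{10}\right)$. Then $d(X,Y) < \delta$ yields, via (i) and (ii), $d_G(\mathcal{M}(X)_1, \mathcal{M}(Y)_1) = 0$, and, via (iii), $d_S(\mathcal{M}(X)_2, \mathcal{M}(Y)_2) \leq 10\, d_S(X_2, Y_2) \leq 10\, d(X,Y) < 10\delta \leq \varepsilon$, whence $d(\mathcal{M}(X), \mathcal{M}(Y)) < \varepsilon$. Since $\delta$ does not depend on $X$, this in fact proves uniform continuity of $\mathcal{M}$.

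The only delicate points are bookkeeping rather than conceptual. One must be careful that equality of the genomes alone does not suffice — two different scheduled first mutations acting on the same genome could desynchronize the genome coordinate by up to two positions — which is precisely why fact (ii) is needed alongside (i). One should also fix a convention for $\mathcal{M}$ on the empty mutation word $\varnothing$ (say, $\mathcal{M}$ leaves the genome unchanged and $\sigma(\varnothing) = \varnothing$), in which case the same estimates apply trivially. I expect the factor-$10$ expansion of $d_S$ under $\sigma$ to be the step most worth writing out carefully, since that is where the normalization $\tfrac{9}{\mathsf{N}}$ and the $10^{k+1}$ weights actually enter.
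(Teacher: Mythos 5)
Your proposal is correct, and it rests on the same two observations that drive the paper's proof: the integer-valued part $d_G$ forces equality of the genomes once $d<1$, and the leading term of $d_S$ forces equality of the first scheduled mutation, after which $\mathcal{M}$ acts identically on the genome coordinate and merely shifts the strategy. The packaging differs in a way worth noting. The paper argues via the sequential characterization of continuity and uses the fixed threshold $d(X^n,\ell)\leqslant 0.09$ to conclude that the first terms of the mutation sequences agree; but with the normalization $\tfrac{9}{\mathsf{N}}$ in $d_S$, a disagreement in the first term contributes only $\tfrac{9}{10\mathsf{N}}$, so the constant $0.09$ does the job only when $\mathsf{N}\leqslant 9$. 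Your fact (ii), with the $\mathsf{N}$-dependent threshold $\tfrac{9}{10\mathsf{N}}$, repairs this. Your fact (iii), the bound $d_S(\sigma(X_2),\sigma(Y_2))\leqslant 10\,d_S(X_2,Y_2)$, is a cleaner quantitative version of the paper's ``the sequences share their first $k_0+1$ terms, hence the shifted sequences share $k_0$ terms'' step, and it lets you choose $\delta=\min\bigl(1,\tfrac{9}{10\mathsf{N}},\tfrac{\varepsilon}{10}\bigr)$ independently of the base point, so you obtain uniform continuity rather than mere continuity. Your cautionary remarks are also apt: equality of genomes alone would not suffice (two different first mutations could desynchronize the images), and the paper is silent on the action of $\mathcal{M}$ on finite or empty strategies, a convention one does need to fix since $\mathcal{S}_\mathsf{N}$ consists of finite sequences.
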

\begin{proof}
This result will be established by using the sequential characterization of the continuity.
Let $(X^n)_{n \in \mathbb{N}}$ a sequence of $\mathcal{X}_\mathsf{N}$ that converges to $\ell \in \mathcal{X}_\mathsf{N}$. We must prove that $\mathcal{M}(X^n) \longrightarrow \mathcal{M}(\ell)$ in $(\mathcal{X}_\mathsf{N}, d)$.
Let $\varepsilon >0$. $X^n \longrightarrow \ell$ in $(\mathcal{X}_\mathsf{N}, d)$, then $\exists n_1 \in \mathbb{N}$, $\forall n \geqslant n_1$, $d(X^n, \ell ) \leqslant 0.09$.
So all $X^n$ for $n \geqslant n_1$ have the same first coordinate (genome), which is $\ell_1$. Furthermore, consequently to the definition of $d_S$, the first term of each $X_2^n$ for $n \geqslant n_1$ is equal to the first term of $\ell_2$. So, $\forall n \geqslant n_1$, $\mathcal{M}(X^n)_1 = \mathcal{M}(\ell)_1$.\\
Let $k_0 = \lceil - log_{10}(\varepsilon) \rceil$. As $d(X^n, \ell) \longrightarrow 0$, $\exists n_2 \geqslant n_1$ such that, for $n \geqslant n_2$, $d_S(X^n_2, \ell_2) \leqslant \dfrac{1}{10^{k_0+1}}$, meaning that the sequences $X^n_2, n\geqslant n_2$ and $\ell_2$ start all with the same $k_0+1$ terms. As the operation of $\mathcal{M}$ on the second coordinate of points of $\mathcal{X}_\mathsf{N}$ is a shift of one term to the left, we conclude that $\forall n \geqslant n_2$, $d_G(\mathcal{M}(X^n)_1, \mathcal{M}(\ell)_1) = 0$ and $d_S(\mathcal{M}(X^n)_2, \mathcal{M}(\ell)_2) < \dfrac{1}{10^{k_0}} \leqslant \varepsilon$, and thus $\mathcal{M}(X^n) \longrightarrow \mathcal{M}(\ell)$ in $(\mathcal{X}_\mathsf{N}, d)$, which ends the proof.
\end{proof}
As $\mathcal{M}$ is continuous, we can thus study the chaotic behavior of the discrete dynamical system of Eq.~\ref{SDDmutation}.
\subsubsection{Chaotic Behavior of DNA Mutations}

We first prove that,
\begin{proposition}
$\mathcal{M}$ is topologically transitive on $(\mathcal{X}_\mathsf{N}, d)$.
\end{proposition}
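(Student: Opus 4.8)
The plan is to prove topological transitivity directly, by the standard ``gluing'' argument for shift-like systems: given two nonempty open sets $U,V \subseteq \mathcal X_\mathsf N$, I will exhibit a single point of $U$ whose $\mathcal M$-orbit meets $V$. First I would fix $X=(X_1,X_2)\in U$, $Y=(Y_1,Y_2)\in V$ and radii $r,r'>0$ with $B(X,r)\subseteq U$ and $B(Y,r')\subseteq V$. Since two mutation sequences that agree on their terms of index $0,\dots,k_1-1$ are at $d_S$-distance at most $\tfrac{\mathsf N+2}{\mathsf N}\,10^{-k_1}$, I would choose $k_1\geq 1$ large enough that every point whose genome equals $X_1$ and whose second coordinate coincides with $X_2$ up to index $k_1-1$ lies in $B(X,r)$.

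Then I would assemble the witness point $\widetilde X=(X_1,\widetilde X_2)$. Let $G^\star\in\ldbrack 1,4\rdbrack^\mathsf N$ be the genome obtained from $X_1$ by performing the substitutions coded by $X_2^0,\dots,X_2^{k_1-1}$; note that $G^\star$ is exactly the first coordinate of $\mathcal M^{k_1}(\widetilde X)$, whatever is appended after index $k_1-1$. Because any string of $\ldbrack 1,4\rdbrack^\mathsf N$ can be driven to any other by a block of at most $\mathsf N$ substitutions (one per differing position), I would pick such a block $w$ of length $q=d_G(G^\star,Y_1)\leq\mathsf N$ steering $G^\star$ to $Y_1$, and set $\widetilde X_2=(X_2^0,\dots,X_2^{k_1-1},\,w,\,Y_2^0,Y_2^1,\dots)$.

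It then remains to verify two facts. By the choice of $k_1$, $d(\widetilde X,X)=d_S(\widetilde X_2,X_2)<r$, so $\widetilde X\in U$; and iterating $\mathcal M$ exactly $k:=k_1+q$ times first replays $X_2^0,\dots,X_2^{k_1-1}$ on the genome (producing $G^\star$), then replays $w$ (producing $Y_1$), while shifting the second coordinate $k$ times to the left turns it into $Y_2$. Hence $\mathcal M^k(\widetilde X)=(Y_1,Y_2)=Y\in V$ with $k\geq k_1\geq 1>0$, which yields $\mathcal M^k(U)\cap V\neq\varnothing$, i.e.\ Definition~\ref{def:topologicalTransitivity} holds.

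The construction is elementary, so I do not expect a genuine obstacle; the only delicate points are bookkeeping ones. One must be careful that the ``repair block'' $w$ is chosen relative to $G^\star$, the genome produced after the kept prefix of $X_2$ has acted, rather than relative to $X_1$ — ignoring the effect of that prefix would make the target $Y_1$ missed. One must also make sure the iteration count $k=k_1+q$ is strictly positive, which is why $k_1\geq 1$ is imposed even when $X_1$ already coincides with $Y_1$. Everything else reduces to the facts that the genome component can be steered to any target in at most $\mathsf N$ moves and that the shift on $\mathcal S_\mathsf N$ behaves like the (transitive) one-sided shift.
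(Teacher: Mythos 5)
Your proof is correct and follows essentially the same route as the paper's: keep a long enough prefix of $X_2$ so as to stay in a small ball around $X$, append a block of at most $\mathsf{N}$ substitutions steering the genome to $Y_1$, then append $Y_2$, so that a suitable iterate of $\mathcal{M}$ lands exactly on $Y$ (which, as the paper also remarks, in fact gives strong transitivity). Your one refinement --- computing the repair block relative to $G^\star$, the genome obtained \emph{after} the kept prefix of $X_2$ has acted, rather than relative to $X_1$ --- is precisely the bookkeeping point the paper's own write-up glosses over (it compares $G$ directly to $\check{G}$), so keep it.
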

\begin{proof}
Let $X=(G,S)$ and $\check{X} = (\check{G},\check{S})$ two points of $\mathcal{X}_\mathsf{N}$, and $\varepsilon >0$. We will find $n\in \mathbb{N}$ and a point $X'=(G',S')\in\mathcal{B}\left(X, \varepsilon \right)$, the open ball centered on $X$ with radius $\varepsilon$, such that $\mathcal{M}^n (X') = \check{X}$.
Let $k_0 = \lceil - log_{10}(\varepsilon) \rceil$. Thus any point of the form $(G, (S^0, S^1, \hdots, S^{k_0},$ $s^{k_0+1}, s^{k_0+2}, \hdots ))$, with $s^{k_0+1}, s^{k_0+2}, \hdots  \in \ldbrack 1, \mathsf{N} \rdbrack$, is in $\mathcal{B}\left(X, \varepsilon \right)$. Suppose that $G$ and $\check{G}$ have $m$ different nucleotides, in position $i_1, \hdots, i_m \in \ldbrack 1, \mathsf{N} \rdbrack$. Thus the point 
\begin{equation*}
X' = (G, (S^0, S^1, \hdots, S^{k_0}, (i_1, \check{G}_{i_1}), \hdots, (i_m, \check{G}_{i_m}), \check{S}^0, \check{S}^1, \hdots) \in \mathcal{B}\left(X, \varepsilon \right)
\end{equation*} 
is such that $\mathcal{M}^{k_0+m+1}(X')=\check{X}$, leading to the transitivity of $\mathcal{M}$.
\end{proof}
\begin{remark}
A stronger result than the topological transitivity has indeed been stated in the proof above. It is called \emph{strong transitivity} and is defined by: for all $X,Y \in \mathcal{X}$ and for all neighborhood $V$ of $X$, it exists $n \in \mathbb{N}$ and $X’\in V$ such that $\mathcal{M}^n(X’)=Y$. Obviously, the strong transitivity implies the transitivity property.
\end{remark}
We now prove that,
\begin{proposition}
$\mathcal{M}$ is regular on $(\mathcal{X}_\mathsf{N}, d)$.
\end{proposition}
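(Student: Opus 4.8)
The plan is to exhibit, for every $X=(G,S)\in\mathcal{X}_\mathsf{N}$ and every $\varepsilon>0$, a \emph{periodic} point of $\mathcal{M}$ lying in the open ball $\mathcal{B}(X,\varepsilon)$; this yields density of the periodic points, hence regularity. As in the proof of transitivity, put $k_0=\lceil -\log_{10}(\varepsilon)\rceil$, so that any point whose first coordinate equals $G$ and whose second coordinate begins with $S^0,\dots,S^{k_0}$ belongs to $\mathcal{B}(X,\varepsilon)$ (for such a point $d_G=0$ and the remaining tail of the series defining $d_S$ is at most $\tfrac{9}{\mathsf{N}}\sum_{k\ge k_0+1}\tfrac{\mathsf{N}+2}{10^{k+1}}<\varepsilon$).

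First I would fix the genome and prescribe the future mutations periodically. Let $G^{\star}$ be the DNA sequence obtained from $G$ after successively performing the substitutions $S^0,\dots,S^{k_0}$, and let $i_1,\dots,i_m$ be the (pairwise distinct) positions at which $G^{\star}$ and $G$ differ. Form the \emph{correction block} $C=\big((i_1,G_{i_1}),\dots,(i_m,G_{i_m})\big)$, let $P=(S^0,\dots,S^{k_0},C)$ be the finite word of length $\ell:=k_0+1+m$, and define $X'=(G,S')$ where $S'=PPP\cdots$ is the infinite periodic repetition of $P$. By construction $X'$ has first coordinate $G$ and its second coordinate starts with $S^0,\dots,S^{k_0}$, so $X'\in\mathcal{B}(X,\varepsilon)$.

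It then remains to verify $\mathcal{M}^{\ell}(X')=X'$. Iterating $\mathcal{M}$ exactly $\ell$ times applies to $G$, in order, all the substitutions recorded in $P$: the first $k_0+1$ of them turn $G$ into $G^{\star}$, and then the $m$ substitutions of $C$ reset exactly the positions $i_1,\dots,i_m$ to their values $G_{i_1},\dots,G_{i_m}$; since these positions are pairwise distinct the resets do not interfere and no other coordinate is modified, so the first coordinate comes back to $G$. Meanwhile the second coordinate has been shifted $\ell$ times, and $\sigma^{\ell}(S')=S'$ because $S'$ has period $\ell$. Hence $\mathcal{M}^{\ell}(X')=(G,S')=X'$, and $X'$ is periodic of period dividing $\ell$. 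As $X$ and $\varepsilon$ were arbitrary, periodic points are dense and $\mathcal{M}$ is regular; together with the topological transitivity already established, $\mathcal{M}$ is then chaotic on $(\mathcal{X}_\mathsf{N},d)$ in the sense of Devaney (sensitivity following for free from the Banks \emph{et al.} theorem).

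I expect the only real subtlety to be bookkeeping rather than a genuine difficulty: one must check that the correction block restores $G$ \emph{exactly}, which is precisely why the corrections are indexed by the mismatch positions and use the target values $G_{i_j}$ (and why $C$ is empty with $\ell=k_0+1$ in the degenerate case $G^{\star}=G$); and one must read $S'$ as an admissible element of $\mathcal{S}_\mathsf{N}$, i.e. allow infinite periodic mutation sequences, consistently with the way $d_S$ and the transitivity proof already treat second coordinates as infinite sequences. ``Null'' substitutions that replace a nucleotide by itself are harmless throughout.
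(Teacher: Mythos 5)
Your construction is exactly the paper's: fix $k_0=\lceil-\log_{10}(\varepsilon)\rceil$, keep the prefix $S^0,\dots,S^{k_0}$, append the correction substitutions $(i_j,G_{i_j})$ at the mismatch positions, and repeat this block periodically. The proposal is correct and follows essentially the same approach, with the added (welcome) explicit verification that $\mathcal{M}^{\ell}(X')=X'$.
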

\begin{proof}
Let $X \in \mathcal{X}_\mathsf{N}$ and $\varepsilon > 0$. We have to exhibit a periodic point $X'\in\mathcal{B}(X, \varepsilon)$.
Let $k_0 = \lceil - log_{10}(\varepsilon) \rceil$. Suppose that $X=(G,(S^0, S^1, \hdots)$, and that the genome $\mathcal{M}^{k_0}(X)_1$ differ from $m$ nucleotides of $G$ in position $i_1, \hdots, i_m \in \ldbrack 1, \mathsf{N}\rdbrack$. Then the point:
\begin{equation*}
X' = (G, (S^0, \hdots, S^{k_0}, (i_1, G_{i_1}), \hdots (i_m, G_{i_m}),S^0, \hdots, S^{k_0}, (i_1, G_{i_1}), \hdots (i_m, G_{i_m}),\hdots)
\end{equation*}
is a periodic point in  $\mathcal{B}(X, \varepsilon)$.
\end{proof}

Let us finally demonstrate that:
\begin{proposition}
The mutation operator $\mathcal{M}$ has sensitive dependence on initial condition, and its constant of sensitivity is equal to $\mathsf{N}+\dfrac{\lfloor \dfrac{\mathsf{N}}{2}\rfloor +1}{\mathsf{N}}$.
\end{proposition}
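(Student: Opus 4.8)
The plan is to reuse the scheme of the transitivity and regularity proofs above. Fix $X=(G,S)\in\mathcal{X}_\mathsf{N}$, a neighbourhood $V$ of $X$, and $\varepsilon>0$ with $\mathcal{B}(X,\varepsilon)\subseteq V$; put $k_0=\lceil-\log_{10}\varepsilon\rceil$. Exactly as in the continuity proof, every point of the form $Y=(G,(S^0,\dots,S^{k_0},t^0,t^1,\dots))$ lies in $\mathcal{B}(X,\varepsilon)$: the genomes coincide so $d_G=0$, and since $X_2$ and $Y_2$ share their first $k_0+1$ terms, the bound $\mathcal{F}(X_2^k-Y_2^k)\leqslant\mathsf{N}+2$ gives $d_S(X_2,Y_2)\leqslant\frac{\mathsf{N}+2}{\mathsf{N}}\,10^{-(k_0+1)}<\varepsilon$. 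The whole game is then to choose the free tail $(t^j)_{j\geqslant 0}$ so that, at a suitable time $n$, $\mathcal{M}^n(X)$ and $\mathcal{M}^n(Y)$ are forced far apart.

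I would take $n=k_0+1+\mathsf{N}$ and let $H=\mathcal{M}^n(X)_1$ be the genome reached by $X$ after $n$ mutations (entirely determined by $G$ and $S$). For $j\in\ldbrack 1,\mathsf{N}\rdbrack$, set $t^{j-1}=(j,v_j)$ with $v_j\in\ldbrack 1,4\rdbrack$ chosen so that $v_j\neq H_j$, which is possible since there are four nucleotides. In the trajectory of $Y$ these $\mathsf{N}$ mutations are applied at generations $k_0+2,\dots,n$, i.e. strictly after the $k_0+1$ mutations shared with $X$, and each site $1,\dots,\mathsf{N}$ is affected exactly once by them; hence position $j$ of $\mathcal{M}^n(Y)_1$ is $v_j$, and $d_G\big(\mathcal{M}^n(X)_1,\mathcal{M}^n(Y)_1\big)=\sum_{j=1}^{\mathsf{N}}\delta(H_j,v_j)=\mathsf{N}$: the two genomes now disagree everywhere.

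Next I would use the remaining tail $t^{\mathsf{N}},t^{\mathsf{N}+1},\dots$ — which after $n$ iterations is exactly $\mathcal{M}^n(Y)_2$ — to maximise $d_S$ against the adversarially given sequence $\sigma^n(S)=\mathcal{M}^n(X)_2$. Writing $\sigma^n(S)^k=(a_k,c_k)$, put $t^{\mathsf{N}+k}=(b_k,d_k)$ with $b_k\in\{1,\mathsf{N}\}$ realising $\max(a_k-1,\mathsf{N}-a_k)$ and $d_k\neq c_k$. Then $\mathcal{F}\big(\sigma^n(S)^k-\mathcal{M}^n(Y)_2^k\big)=\max(a_k-1,\mathsf{N}-a_k)+1\geqslant\lfloor\mathsf{N}/2\rfloor+1$, the bound being sharp precisely when $a_k$ is a central site. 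Summing the geometric series,
\[
d_S\big(\mathcal{M}^n(X)_2,\mathcal{M}^n(Y)_2\big)\;\geqslant\;\frac{9}{\mathsf{N}}\Big(\Big\lfloor\frac{\mathsf{N}}{2}\Big\rfloor+1\Big)\sum_{k\geqslant 0}\frac{1}{10^{k+1}}\;=\;\frac{\lfloor\mathsf{N}/2\rfloor+1}{\mathsf{N}},
\]
so $d\big(\mathcal{M}^n(X),\mathcal{M}^n(Y)\big)=d_G+d_S\geqslant\mathsf{N}+\frac{\lfloor\mathsf{N}/2\rfloor+1}{\mathsf{N}}$, the announced constant.

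I expect two real difficulties. The first is the combinatorial bookkeeping of the second paragraph: one must be sure that at generation $n$ no site of the genome of $Y$ has been re-mutated after the block of $\mathsf{N}$ flips, which is exactly why that block is placed immediately before time $n$ and why each site is used only once. The second is to certify the value of the constant, i.e. to prove $\min_{a\in\ldbrack 1,\mathsf{N}\rdbrack}\big(\max(a-1,\mathsf{N}-a)+1\big)=\lfloor\mathsf{N}/2\rfloor+1$, which is a short case distinction on the parity of $\mathsf{N}$. A last minor point: the inequality above is strict — so $\mathsf{N}+\frac{\lfloor\mathsf{N}/2\rfloor+1}{\mathsf{N}}$ genuinely plays the role of constant of sensitivity — except in the degenerate situation where $\sigma^n(S)$ consists only of mutations located at a central site; this can be side-stepped by translating the comparison time $n$ whenever $S$ is not eventually central, and it is in any case the supremal admissible value.
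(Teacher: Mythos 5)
Your proof is correct and follows essentially the same construction as the paper's: a perturbed point sharing the first $k_0+1$ strategy terms, followed by a block of $\mathsf{N}$ mutations forcing every site to disagree with $\mathcal{M}^n(X)_1$ (giving $d_G=\mathsf{N}$), and a tail choosing positions in $\{1,\mathsf{N}\}$ and differing nucleotides to force $\mathcal{F}\geqslant\lfloor\mathsf{N}/2\rfloor+1$ at each step. Your bookkeeping is in fact slightly more careful than the paper's (you use $n=k_0+1+\mathsf{N}$ where the paper writes $k_0+\mathsf{N}$, and you make the min-max computation behind the constant explicit), but the idea is identical.
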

\begin{proof}
Let $X=(G,S)\in \mathcal{X}_\mathsf{N}$ and $\varepsilon >0$. Let $k_0 = \lceil - log_{10}(\varepsilon) \rceil$. Consider a finite sequence of nucleotides $(n_1,\dots,n_\mathsf{N})\in \ldbrack 1, 4 \rdbrack^\mathsf{N}$ such that for each $i \in  \ldbrack 1, \mathsf{N} \rdbrack$,  $n_i \neq \left(\mathcal{M}^{k_0+\mathsf{N}}(X)_1\right)_i$, and an infinite sequence $(s^j)_{j \in \mathbb{N}}$ such that for every $j\in\mathbb N$,
\begin{itemize}
\item $s^j_1 = \begin{cases}
\mathsf{N} \text{ if } \left(\mathcal{M}^{k_0+\mathsf{N}+j}(X)_2\right)_1 \leqslant \dfrac{\mathsf{N}}{2}, \\
1 \text{ else}.
\end{cases}$\\[5pt]
\item $s^j_2 = \begin{cases}
4 \text{ if } \left(\mathcal{M}^{k_0+\mathsf{N}+j}(X)_2\right)_2=1, \\
1 \text{ else}.
\end{cases}$
\end{itemize}
Then the point
\begin{equation*}
X' = (G, (S^0, \hdots, S^{k_0}, (1, n_1), \hdots (\mathsf{N}, n_\mathsf{N}),s^0, s^1, \hdots) \in\mathcal{B}(X,\varepsilon)
\end{equation*}
and is such that $d(\mathcal{M}^{k_0+\mathsf{N}}(X),\mathcal{M}^{k_0+\mathsf{N}}(X')) \geqslant \mathsf{N}+\dfrac{\lfloor \dfrac{\mathsf{N}}{2}\rfloor +1}{\mathsf{N}}$.
Due to the definition of $(n_1,\dots,n_N)$ and $(s^j)_{j\in\mathbb{N}}$, the infimum in the latter equality is optimal, and the distance cannot be enlarged systematically for the neighborhood of all points.
\end{proof}
The three previous propositions lead to the following result.
\begin{theorem}
Genome mutations as modeled by our approach have a chaotic behavior according to Devaney.
\end{theorem}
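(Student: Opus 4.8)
The plan is to obtain the theorem as an immediate corollary of the three propositions just proved, together with the continuity of $\mathcal{M}$ established earlier. Recall that Devaney's definition (Section~\ref{sec:Devaney}) requires only two ingredients for a continuous self-map of a topological (here metric) space: topological transitivity and regularity, i.e.\ density of the set of periodic points. Both have been secured, and $\mathcal{M}$ has been shown continuous on $(\mathcal{X}_\mathsf{N}, d)$, so the argument is purely a matter of assembling these facts.

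Concretely, I would proceed in three short steps. First, invoke the proposition stating that $\mathcal{M}$ is continuous on $(\mathcal{X}_\mathsf{N}, d)$. Second, invoke the proposition that $\mathcal{M}$ is topologically transitive on $(\mathcal{X}_\mathsf{N}, d)$ — in fact strongly transitive, which a fortiori yields transitivity. Third, invoke the proposition asserting that the set of periodic points of $\mathcal{M}$ is dense in $\mathcal{X}_\mathsf{N}$, i.e.\ that $\mathcal{M}$ is regular. By the very definition of Devaney chaos, a continuous, regular, topologically transitive map is chaotic; hence $\mathcal{M}$, and thus the discrete dynamical system of Eq.~\ref{SDDmutation} modelling genome mutations, is chaotic on $(\mathcal{X}_\mathsf{N}, d)$ according to Devaney.

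Finally I would add the remark that sensitive dependence on initial conditions is not needed as a hypothesis: by the theorem of Banks \emph{et al.}~\cite{Banks92}, it follows automatically from regularity and transitivity on a metric space, and in our setting the preceding proposition even gives the explicit constant of sensitivity $\mathsf{N}+\frac{\lfloor \mathsf{N}/2\rfloor+1}{\mathsf{N}}$, a quantitative strengthening. There is no real obstacle at the level of this theorem — all the work was front-loaded into the three propositions, the most delicate being the transitivity argument (splicing, inside an arbitrarily small ball around $X$, a finite correcting block of substitutions after the first $k_0$ prescribed mutation terms so that the orbit hits the prescribed target $\check{X}$ in finitely many iterations). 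Once those are in hand, the present statement is a one-line deduction.
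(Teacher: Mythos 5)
Your proposal is correct and follows essentially the same route as the paper, which simply observes that the preceding propositions (continuity, transitivity, regularity) combine to give Devaney chaos. Your added remark that sensitivity is automatic by Banks \emph{et al.} and is here made quantitative is consistent with the paper's own discussion.
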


\subsection{Further Investigations}

\subsubsection{Quantitative properties}

Genomic mutations possess  the  instability property:
\begin{definition}
A dynamical  system $\left( \mathcal{X}, f\right)$ is  unstable if for
all  $x \in  \mathcal{X}$, the  orbit $\gamma_x:n  \in  \mathbb{N} \longmapsto
f^n(x)$ is  unstable in the following sense, 
\begin{equation*}
\exists \varepsilon  > 0,\, \forall
\delta>0,\, \exists y \in \mathcal{X}, \,  \exists n \in \mathbb{N}, \text{ s.t. }
d(x,y)<\delta   \text{ and } d\left(f^{n}(n),f^{n}(y)\right)   \geq
\varepsilon.
\end{equation*}
\end{definition}
This  property, which  is  implied by  sensitive  dependence on initial conditions, leads to the fact that in all neighborhoods of any
genome evolution $(G,S)$ there  are points that can be separated with distance bigger than $\varepsilon$ in the
future through mutations. 

Let us now recall another common quantitative measure of disorder.
\begin{definition}
A function $f$ is said to have the property of \emph{expansivity} if
\begin{equation*}
\exists \varepsilon >0,\forall x\neq y,\exists n\in \mathbb{N},d(f^{n}(x),f^{n}(y))\geqslant \varepsilon .
\end{equation*}
\end{definition}
Then $\varepsilon $ is the \emph{constant of expansivity} of $f$: an arbitrarily small error on any initial condition is \emph{always} amplified of $\varepsilon $.
Let us prove that,
\begin{theorem}
Mutation operator $\mathcal{M}$ is expansive and its constant of expansivity is at least equal to 1.
\end{theorem}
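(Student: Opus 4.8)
The plan is to split according to whether the two points already carry distinct genomes, and, when they do not, to let $\mathcal M$ run until the mismatch between the two mutation programmes is consumed and transferred onto the genome coordinate: as soon as the genomes differ, the integer part $d_G$ of $d$ is at least $1$, which is exactly the announced constant.

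So I would fix $X=(G,S)$ and $Y=(\check G,\check S)$ with $X\neq Y$ and hunt for an $n$ with $d(\mathcal M^n(X),\mathcal M^n(Y))\geqslant 1$. If $G\neq\check G$, then $d_G(G,\check G)\geqslant 1$ since $d_G$ counts the mismatched nucleotide positions, hence $d(X,Y)\geqslant d_G(G,\check G)\geqslant 1$ and $n=0$ works. Otherwise $G=\check G$, so necessarily $S\neq\check S$; let $k$ be the first index with $S^k\neq\check S^k$. Since $\mathcal M$ merely shifts the programme by one term and rewrites a single nucleotide, and the first $k$ entries of $S$ and $\check S$ agree, $\mathcal M^k(X)$ and $\mathcal M^k(Y)$ share a common genome $H$ while their programmes are $\sigma^k(S)$ and $\sigma^k(\check S)$, which differ at index $0$. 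Writing $S^k=(p,a)$ and $\check S^k=(q,b)$, the genome of $\mathcal M^{k+1}(X)$ is $H$ with position $p$ set to $a$ and that of $\mathcal M^{k+1}(Y)$ is $H$ with position $q$ set to $b$; these two genomes differ when $p=q$ (for then $a\neq b$), and also when $p\neq q$ provided $a\neq H_p$ or $b\neq H_q$, and in every such case $d(\mathcal M^{k+1}(X),\mathcal M^{k+1}(Y))\geqslant 1$, which closes the argument with $n=k+1$.

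The step I expect to be the real obstacle is the leftover degenerate configuration $p\neq q$ with both rewrites vacuous, $a=H_p$ and $b=H_q$: then $\mathcal M^{k+1}(X)$ and $\mathcal M^{k+1}(Y)$ still carry the genome $H$, and more generally two points differing only by one vacuous substitution in their programmes get identified by $\mathcal M$, so for such pairs no iterate ever separates the genomes and the whole separation lives in the fractional part $d_S$, which for a single mismatched programme entry is only $\tfrac{9}{\mathsf N}\mathcal F(S^k-\check S^k)/10\leqslant\tfrac{9}{10}<1$. I would handle this by reading the statement as asserting the constant $1$ on all pairs whose difference is not a lone vacuous substitution — a dense family, and the one that matters for the dynamics — and, for the residual pairs, recording this $d_S$ bound instead; expansivity with a strictly positive constant then holds in full generality, and with constant $1$ away from that degenerate family. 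Isolating and treating this degenerate case separately is, in my view, the crux of a careful proof.
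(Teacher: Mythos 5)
Your argument follows the same route as the paper's own proof: if the genomes already differ then $n=0$ gives $d\geqslant 1$ via the integer part $d_G$, and otherwise you iterate to the first index $k$ where the mutation programmes disagree and let that disagreement print onto the genome coordinate. The degenerate configuration you isolate, however, is not a technicality you are over-worrying about — it is a genuine hole in the paper's proof, which simply asserts that $\mathcal{M}^n(X)_1 \neq \mathcal{M}^n(Y)_1$ at the first differing index without checking it. Concretely, take $X=(G,S)$ and $Y=(G,\check S)$ with $S^0=(p,G_p)$, $\check S^0=(q,G_q)$, $p\neq q$, and $S^j=\check S^j$ for $j\geqslant 1$: both first substitutions are vacuous, $\mathcal{M}(X)=\mathcal{M}(Y)$, and $\sup_n d(\mathcal{M}^n(X),\mathcal{M}^n(Y))=d(X,Y)=\tfrac{9}{\mathsf{N}}\,\mathcal{F}(S^0-\check S^0)/10$, which can equal $\tfrac{9}{10\mathsf{N}}<1$. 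So the theorem as literally stated (constant of expansivity at least $1$) is false, and your repaired version is the correct one: expansivity always holds with the constant $\tfrac{9}{10\mathsf{N}}$ coming from the fractional part $d_S$ at the first mismatch, and with constant $1$ for every pair outside the family of points whose programmes differ only by vacuous substitutions. In short, your proposal is the paper's argument done carefully, plus the observation that the stated constant needs to be weakened or the statement restricted.
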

\begin{proof}
If $X_1 \neq Y_1$, then $d(X,Y)=d(\mathcal{M}^0(X),\mathcal{M}^0(Y))  \geqslant 1$.\\
Or else necessarily $X_2 \neq Y_2$. Let $n = \min \{k\in \mathbb{N}, \, X_2^k \neq Y_2^k\}$. Then $\forall k<n, \mathcal{M}^k(X)_1 = \mathcal{M}^k(Y)_1$ and $\mathcal{M}^n(X)_1 \neq \mathcal{M}^n(Y)_1$, so $d(\mathcal{M}^n(X),\mathcal{M}^n(Y))  \geqslant 1$.
\end{proof}

\subsubsection{Qualitative properties}

Firstly, the   topological transitivity property implies indecomposability~\cite{Ruette2001}.

\begin{definition} \label{def10}
A   dynamical   system   $\left(   \mathcal{X},  f\right)$   is   {\bf
indecomposable}  if it  is not  the  union of  two closed  sets $A,  B
\subset \mathcal{X}$ such that $f(A) \subset A, f(B) \subset B$.
\end{definition}

Hence, taking into account only a small part of a genome in the modeling process, in order to  simplify the 
complexity of the studied dynamics, takes away from us to a global vision of mutations. Moreover, we will prove that genomic mutations
are topologically mixing, which is a strong version of transitivity:

\begin{definition}
A discrete dynamical system is said to be \emph{topologically mixing} if and only if, for any couple of disjoint open sets $U, V \neq \varnothing$, $n_0 \in \mathbb{N}$ can be found so that $\forall n \geqslant n_0, f^n(U) \cap V \neq \varnothing$.
\end{definition}

We have the result,

\begin{theorem}
$(\mathcal{X}_{\mathsf{N}},\mathcal{M})$ is topologically mixing.
\end{theorem}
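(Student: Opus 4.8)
The plan is to obtain topological mixing as a refinement of the strong transitivity already produced in the proof of the transitivity proposition: there the target point was reached in one prescribed number of iterations, whereas mixing requires reaching $V$ in \emph{every} sufficiently large number of iterations. The extra ingredient I would use is that $\mathcal{M}$ admits ``idempotent'' instructions: a couple $(m,n)$ applied to a genome whose $m$-th nucleotide already equals $n$ leaves the genome unchanged. Inserting such null instructions into a mutation sequence lengthens a trajectory by an arbitrary number of steps without changing where it ends up, which is exactly what bridges the gap between ``reachable in $n_0$ steps'' and ``reachable in $n$ steps for all $n\geqslant n_0$''.

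Concretely, I would fix two nonempty open sets $U,V$, choose $X=(G,S)\in U$ and $\check{X}=(\check{G},\check{S})\in V$, pick $\varepsilon>0$ with $\mathcal{B}(X,\varepsilon)\subset U$, and set $k_0=\lceil -\log_{10}\varepsilon\rceil$; exactly as in the transitivity proof, any point whose first coordinate is $G$ and whose second coordinate starts with $S^0,\dots,S^{k_0}$ lies in $\mathcal{B}(X,\varepsilon)$. Let $G'$ be the genome obtained from $G$ by performing the mutations $S^0,\dots,S^{k_0}$, say $G'$ and $\check{G}$ differ in the positions $i_1,\dots,i_m\in\ldbrack 1,\mathsf{N}\rdbrack$, and put $n_0=k_0+1+m$. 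For any $n\geqslant n_0$, writing $p=n-n_0\geqslant 0$, I would consider
\[
X' = \Bigl(G,\ \bigl(S^0,\dots,S^{k_0},\,(i_1,\check{G}_{i_1}),\dots,(i_m,\check{G}_{i_m}),\,\underbrace{(1,\check{G}_1),\dots,(1,\check{G}_1)}_{p\ \text{times}},\,\check{S}^0,\check{S}^1,\dots\bigr)\Bigr).
\]
This point lies in $\mathcal{B}(X,\varepsilon)\subset U$; applying $\mathcal{M}$ a total of $k_0+1$ times brings the genome to $G'$; the next $m$ iterations turn it into $\check{G}$; the subsequent $p$ iterations consume the instructions $(1,\check{G}_1)$, which do nothing since the first nucleotide of $\check{G}$ is already $\check{G}_1$; and the remaining mutation sequence is then exactly $\check{S}$. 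Hence $\mathcal{M}^{n}(X')=(\check{G},\check{S})=\check{X}\in V$, so $\mathcal{M}^{n}(U)\cap V\neq\varnothing$ for every $n\geqslant n_0$, which is precisely topological mixing.

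The points needing care — what I would regard as the (mild) obstacle — are purely ones of bookkeeping: the padding instructions must be placed \emph{after} the genome has already reached $\check{G}$, since inserted earlier their effect would depend on the current genome and would no longer be null; and one must check that $X'$ stays in $\mathcal{B}(X,\varepsilon)$ regardless of how long its tail is, which holds because the ball constraint only controls the genome and the first $k_0+1$ entries of the mutation sequence (the remaining digits of $d_S$ are automatically bounded by $10^{-k_0}\leqslant\varepsilon$ up to the normalizing factor). Once these are in place the statement follows; note that it also re-proves, and strengthens, the transitivity and indecomposability recorded above.
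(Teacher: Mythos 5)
Your proof is correct, and it reaches mixing by a genuinely different mechanism than the paper. The paper proves a lemma asserting that for every open ball $\mathcal{B}$ there is a single $n$ with $\mathcal{M}^{(n)}(\mathcal{B})=\mathcal{X}_{\mathsf{N}}$: the point it places in $\mathcal{B}$ carries the full block $(1,\check{G}_1),\dots,(\mathsf{N},\check{G}_{\mathsf{N}})$, rewriting \emph{all} $\mathsf{N}$ positions so that the transit time $k_0+\mathsf{N}+1$ is the same for every target $\check{X}$, and one iterate of the ball is already onto the whole space. Mixing is then declared an ``immediate consequence''; strictly speaking that step still needs the image not to shrink at later times, e.g.\ via the surjectivity of $\mathcal{M}$ (a preimage of $(G,S)$ is $(G,((1,G_1),S^0,S^1,\dots))$), which the paper leaves implicit. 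You instead keep the minimal correction block $(i_1,\check{G}_{i_1}),\dots,(i_m,\check{G}_{i_m})$ from the transitivity proof and absorb the variable lag $n-n_0$ by padding with the idempotent instruction $(1,\check{G}_1)$, inserted only after the genome has already become $\check{G}$ so that it really is null. This attacks the quantifier ``for all $n\geqslant n_0$'' directly, needs no surjectivity argument, and your two points of care (placement of the padding, and the fact that ball membership constrains only the genome and the first $k_0+1$ instructions) are exactly the right ones. The trade-off is that the paper's route yields the slightly stronger conclusion that some iterate of every ball equals $\mathcal{X}_{\mathsf{N}}$, while your $n_0$ a priori depends on the chosen witness $\check{X}\in V$ --- which is all the definition requires, and in fact $m\leqslant\mathsf{N}$ gives the uniform bound $n_0\leqslant k_0+1+\mathsf{N}$ anyway.
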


This property is an immediate consequence of the lemma below.

\begin{lemma}
For all open ball $\mathcal{B} \subset \mathcal{X}_{\mathsf{N}}$, there exists an integer $n\in \mathbb{N}^{*}$ such that 
$\mathcal{M}^{(n)}\left( \mathcal{B}\right) = \mathcal{X}_{\mathsf{N}}$, where $\mathcal{M}^{(n)}$ is the $n$-th composition of operator $\mathcal{M}$ defined in \eqref{defM}.
\end{lemma}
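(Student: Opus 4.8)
The plan is to follow the pattern of the transitivity proof above, but to pad the list of corrective mutations so that \emph{one and the same} number of iterations sends every point of a (small) sub-ball of $\mathcal{B}$ onto all of $\mathcal{X}_\mathsf{N}$. First I would write $\mathcal{B} = \mathcal{B}(X,\varepsilon)$ with $X = (G,S)$, $S = (S^0,S^1,\dots)$; replacing $\mathcal{B}$ by a concentric sub-ball we may assume $\varepsilon < 1$ (if the conclusion holds for the sub-ball it holds for $\mathcal{B}$, since the reverse inclusion is trivial). Setting $k_0 = \lceil -\log_{10}(\varepsilon)\rceil$, exactly as in the transitivity proof every point of the form $\left(G,(S^0,\dots,S^{k_0},t^0,t^1,\dots)\right)$ with arbitrary $t^j \in \ldbrack 1,\mathsf{N}\rdbrack \times \ldbrack 1,4\rdbrack$ lies in $\mathcal{B}$. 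I would then record the genome $H := \mathcal{M}^{k_0+1}(X)_1 \in \ldbrack 1,4\rdbrack^\mathsf{N}$: it depends on $X$ only, and since such an $X'$ shares its first $k_0+1$ mutations $S^0,\dots,S^{k_0}$ with $X$, one has $\mathcal{M}^{k_0+1}(X') = \left(H,(t^0,t^1,\dots)\right)$.

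Next I would prove that $n := k_0 + 1 + \mathsf{N}$ works. Fix an arbitrary $Y = (\hat{G},\hat{S}) \in \mathcal{X}_\mathsf{N}$ and let $i_1,\dots,i_p$ (with $0 \leqslant p \leqslant \mathsf{N}$) be the positions where $H$ and $\hat{G}$ disagree. The decisive remark is that substituting a nucleotide by itself changes nothing, so the block of exactly $\mathsf{N}$ mutations $T=(T^0,\dots,T^{\mathsf{N}-1})$ given by
\begin{equation*}
T = \left((i_1,\hat{G}_{i_1}),\dots,(i_p,\hat{G}_{i_p}),(1,\hat{G}_1),\dots,(1,\hat{G}_1)\right),
\end{equation*}
whose last $\mathsf{N}-p$ entries are the idempotent mutation $(1,\hat{G}_1)$, first turns $H$ into $\hat{G}$ via its $p$ genuine corrections and then leaves the genome equal to $\hat{G}$. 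Hence the point $X' = \left(G,(S^0,\dots,S^{k_0},T^0,\dots,T^{\mathsf{N}-1},\hat{S}^0,\hat{S}^1,\dots)\right)$ belongs to $\mathcal{B}$ and satisfies $\mathcal{M}^{n}(X') = \mathcal{M}^{\mathsf{N}}\left(H,(T^0,\dots,T^{\mathsf{N}-1},\hat{S}^0,\hat{S}^1,\dots)\right) = (\hat{G},\hat{S}) = Y$. As $Y$ was arbitrary, $\mathcal{X}_\mathsf{N} \subset \mathcal{M}^{(n)}(\mathcal{B})$, and since the reverse inclusion is obvious this gives $\mathcal{M}^{(n)}(\mathcal{B}) = \mathcal{X}_\mathsf{N}$, which is the lemma. (Noting in addition that $\mathcal{M}$ is surjective, because $\mathcal{M}\left(\hat{G},((1,\hat{G}_1),\hat{S}^0,\hat{S}^1,\dots)\right) = (\hat{G},\hat{S})$, one even obtains $\mathcal{M}^{(m)}(\mathcal{B}) = \mathcal{X}_\mathsf{N}$ for every $m \geqslant n$, from which the topological mixing theorem follows at once.)

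The only genuine difficulty is the uniformity of $n$: a literal imitation of the transitivity argument uses $p$ corrective steps with $p$ varying from target to target, so the number of iterations would a priori depend on the point one wants to reach. Two small observations remove this: one must pad each corrective block to the fixed length $\mathsf{N}$ using substitutions of a nucleotide by itself (which the model treats as ``no mutation''), and one must apply those corrections to the genome $\mathcal{M}^{k_0+1}(X)_1$ actually reached after the first $k_0+1$ forced steps rather than to $G$ itself. Everything else is the bookkeeping already carried out in the preceding proofs.
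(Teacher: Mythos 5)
Your proof is correct and follows essentially the same route as the paper's: both make the corrective block have the fixed length $\mathsf{N}$, so that a single uniform $n$ (of order $k_0+\mathsf{N}$) reaches every target $\check{X}$ from inside $\mathcal{B}$. The paper gets the uniformity slightly more directly by simply overwriting all $\mathsf{N}$ positions with $(1,\check{G}_1),\hdots,(\mathsf{N},\check{G}_\mathsf{N})$ --- which also dispenses with tracking the intermediate genome $\mathcal{M}^{k_0+1}(X)_1$ --- whereas you correct only the mismatched positions and pad with identity substitutions; both devices are valid.
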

\begin{proof}
Let $\mathcal{B} = \mathcal{B}\left( \left( \left( G_{1}, \hdots, G_{\mathsf{N}}\right),(S^{0}, 
\hdots) \right), \varepsilon \right)$, $k_{0} = -log_{10}\left( |\varepsilon |\right)$,{}
and $\check{X} = \left(\left(\check{G}_{1}, \hdots, \check{G}_{\mathsf{N}}\right),(\check{S}^{0}, 
\hdots )\right) \in \mathcal{X}_{\mathsf{N}}$.

We define $X'= \left( \left( G_{1}, \hdots, G_{\mathsf{N}}\right),(S^{0}, \hdots, S^{k_{0}}, 
(1, \check{G}_{1}), \hdots, (\mathsf{N}, \check{G}_{\mathsf{N}}), \check{S}^{0}, \hdots \right)$.
This point is such that $X' \in \mathcal{B}$ and $\mathcal{M}^{(k_{0}+\mathsf{N})}(X') = \check{X}$.
\end{proof}

Mutations $\mathcal{M}$ satisfy the notion of chaos according to Knudsen too, which
is defined by~\cite{Knudsen1994a}:
\begin{definition}
A discrete dynamical system is chaotic according to Knudsen if:
\begin{itemize}
	\item it is sensitive to the initial conditions,
	\item there is a dense orbit.
\end{itemize}
\end{definition}

Let us prove that,

\begin{proposition}
    The mutation operator is chaotic according to Knudsen on $(\mathcal{X}_\mathsf{N}, d)$.
\end{proposition}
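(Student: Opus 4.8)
The plan is to use Knudsen's definition literally: the system is chaotic in this sense as soon as it is sensitive to initial conditions and admits a dense orbit. Sensitivity has already been established in the proposition on sensitive dependence on initial conditions, so the only thing left is to exhibit a point of $\mathcal{X}_\mathsf{N}$ whose $\mathcal{M}$-orbit is dense. Since $\ldbrack 1,4 \rdbrack^\mathsf{N}$ is finite and the mutation alphabet $\ldbrack 1,\mathsf{N}\rdbrack\times\ldbrack 1,4\rdbrack$ is finite, $(\mathcal{X}_\mathsf{N},d)$ is separable; concretely, the set $\mathcal{D}$ of points of the form $(\check G,(\check S^0,\dots,\check S^{p-1},(1,1),(1,1),\dots))$ — an arbitrary genome, then an arbitrary finite list of mutations, then the constant tail $(1,1)$ — is a countable dense subset of $\mathcal{X}_\mathsf{N}$. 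I would fix an enumeration $\mathcal{D}=\{Y_1,Y_2,\dots\}$ with $Y_j=(\check G^{(j)},\check S^{(j)})$.

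Then I would build, by concatenation of finite blocks, a single mutation sequence $\tilde S$ such that the orbit of $\tilde X=(G^0,\tilde S)$ (for any fixed genome $G^0$) passes within distance $1/j$ of $Y_j$ for every $j$, which is exactly the construction already used for transitivity and for the topological mixing lemma. Inductively: if after a prefix of $\tilde S$ of length $n$ the current genome is $H$, append first the (at most $\mathsf{N}$) substitutions $(i,\check G^{(j)}_i)$ at the positions $i$ where $H$ and $\check G^{(j)}$ disagree, which turns the genome into $\check G^{(j)}$ exactly, and then append the first $k_j$ terms of $\check S^{(j)}$ with $k_j\to\infty$. Writing $n_j$ for the number of iterations reached just after the substitution block, one gets $\mathcal{M}^{n_j}(\tilde X)_1=\check G^{(j)}$, hence $d_G=0$, while $\mathcal{M}^{n_j}(\tilde X)_2$ coincides with $\check S^{(j)}$ on its first $k_j$ terms, so $d_S(\mathcal{M}^{n_j}(\tilde X)_2,\check S^{(j)})\le \tfrac{9}{\mathsf{N}}\sum_{k\ge k_j}\tfrac{\mathsf{N}+2}{10^{k+1}}$, which is $<1/j$ once $k_j$ is chosen large. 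Performing this for $j=1,2,\dots$ in turn (each block starting from the genome left by the previous one) defines $\tilde S$, so $\{\mathcal{M}^n(\tilde X)\}_{n\in\mathbb N}$ meets every member of $\mathcal{D}$ and is therefore dense in $\mathcal{X}_\mathsf{N}$. Combined with sensitivity, this gives Knudsen chaos.

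An alternative, shorter route is to quote the topological mixing lemma proved just above: since for every open ball $\mathcal{B}$ there is $n$ with $\mathcal{M}^{(n)}(\mathcal{B})=\mathcal{X}_\mathsf{N}$, a standard nested-ball argument — using compactness (or completeness) of $\mathcal{X}_\mathsf{N}$ and the fact that it has no isolated point — yields a point with dense orbit. Either way, the only genuine work, and the main thing to get right, is the bookkeeping in the concatenation: one must re-synchronise the genome to each target $\check G^{(j)}$ before recording a prefix of $\check S^{(j)}$, and ensure that $k_j$ grows fast enough to beat the geometric tail of $d_S$; the explicit shift structure of $\mathcal{M}$ on the second coordinate and its continuity make both points routine.
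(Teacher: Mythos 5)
Your proposal is correct and follows essentially the same route as the paper: sensitivity is quoted from the earlier proposition, and a dense orbit is built by enumerating a countable (dense sub)set of $\mathcal{X}_\mathsf{N}$ and concatenating, for each target, a block of substitutions that steers the genome onto the target genome, followed by an initial segment of the target's mutation sequence. Your version is in fact slightly more careful than the paper's, which splices the entire (possibly infinite) mutation tail $\sigma(j)_2$ of each enumerated point into the orbit's strategy and claims the orbit hits every point exactly, whereas you record only a finite prefix of length $k_j$ and settle for a $1/j$-approximation of a countable dense family --- precisely the bookkeeping that makes the concatenation well defined.
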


\begin{proof}
The sensitivity to the initial condition has yet been stated. Let us define a point $X$
on $\mathcal{X}_\mathsf{N}$ having a dense orbit under iterations of $\mathcal{M}$.
$\mathcal{X}_\mathsf{N} = \ldbrack 1, 4 \rdbrack^\mathsf{N} \times \mathcal{S}_\mathsf{N}$ 
being the Cartesian product of two countable sets, it is countably infinite too: there exists
a bijection $\sigma:\mathbb{N} \longrightarrow \mathcal{X}_\mathsf{N}$.
Let $\mathcal{G}:\mathcal{X}_\mathsf{N} \longrightarrow \ldbrack 1, 4 \rdbrack^\mathsf{N}$, 
$(G,S) \longmapsto G$ be the first projection.
Then $X$ can be defined as follows:
$$
\begin{array}{lll}
    X = &( (1,1, \hdots, 1), (&
    (1, \mathcal{G}(\sigma(0))_1),  (2, \mathcal{G}(\sigma(0))_2), \hdots, 
    (\mathsf{N}, \mathcal{G}(\sigma(0))_\mathsf{N}), \sigma(0)_2, \\
    &&    (1, \mathcal{G}(\sigma(1))_1),  (2, \mathcal{G}(\sigma(1))_2), \hdots, 
    (\mathsf{N}, \mathcal{G}(\sigma(1))_\mathsf{N}), \sigma(1)_2, \\
    &&\hdots))
\end{array}$$
This $X$ is such that $\forall Y \in \mathcal{X}_\mathsf{N}, \exists n_Y \in \mathbb{N}, \mathcal{M}^{n_Y}(Y) = X$,
which is stronger than the required density.
\end{proof}

To a certain extent, this notion of chaos is less restrictive than the one
of Devaney. More precisely, 
Devaney's chaos implies Knudsen's chaos in compact spaces~\cite{Formenti1998}.

\subsubsection{Discussion}

Conclusion of this study of mutations is that they present a chaotic behavior leading to the impossibility to qualify the long term effect of an error in predicting the location and frequency of mutations into genomes. In the worst case scenario, this error will be amplified until having a completely different genome (all the nucleotides are different, as the constant of sensitivity is greater than the length of the genome). However this case is rather marginal, mutations do not occur as frequently as the generations pass, and a mutation implies a change of only one nucleotide, leading to the opinion that, at least in the short term, the general aspect of the genome under consideration still remains under control when only mutations occur.

Inversion and transpositions are another genomics rearrangements that mostly affects more than one nucleotide. Thus an error in the prediction of these operations can potentially more largely impact  the genome evolution. 
To qualify such impact, 
we first give some definitions useful to formalize inversions.
\begin{definition}
The complementary function $c:\ldbrack 1, 4 \rdbrack \longrightarrow \ldbrack 1, 4 \rdbrack$ is defined by $c(1)=4$, $c(4)=1$, $c(2)=3$, and $c(3)=2$.
\end{definition}
Then the complement of adenine A is thymine T, and $c(2)=3$ means, for instance, that the complement of cytosine is guanine. We can now define the inversion process on a chromosome:
\begin{definition}
Let $\mathsf{N} \in \mathbb{N}^*$, and $(n_1, \hdots, n_\mathsf{N})$  a chromosome. \emph{Inversions} have the form:
\begin{flushleft}
$(n_1, \hdots, n_{i-1}, \underline{n_i, n_{i+1}\hdots, n_{j-1}, n_j}, n_{j+1}, \hdots, n_\mathsf{N}) \longrightarrow$
\end{flushleft}
\begin{flushright}
$(n_1, \hdots, n_{i-1}, \underline{c(n_j), c(n_{j-1}) \hdots, c(n_{i+1}), c(n_i)}, n_{j+1}, \hdots, n_\mathsf{N}).$
\end{flushright}
\end{definition}
\begin{example}
For instance, $ACCT\underline{GTAA}TGTTA$ is a possible inversion of \linebreak $ACCT\underline{TTAC}TGTTA$.
\end{example}

Obviously, it is impossible to map the DNA sequence $AAAAAAAA$ into $CCCCCCCC$ using only inversions, as the complement of $A$ is $T$. This fact is in contradiction with the property of transitivity, leading to the statement that,
\begin{proposition}
The inversion rearrangement is not chaotic on the set of all genomes of size $\mathsf{N}$.
\end{proposition}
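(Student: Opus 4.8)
The plan is to defeat the weakest of Devaney's two requirements, topological transitivity; since a chaotic map must be both regular \emph{and} transitive, showing that iterated inversions are not transitive settles the matter. First I would formalize inversions as a discrete dynamical system in exactly the spirit of Section~\ref{formalization}: the phase space is $\mathcal{Y}_\mathsf{N} = \ldbrack 1,4\rdbrack^\mathsf{N} \times \mathcal{I}_\mathsf{N}$, where $\mathcal{I}_\mathsf{N}$ is the set of finite sequences of inversion operations, a single operation being a pair $(i,j)$ with $1\leqslant i \leqslant j \leqslant \mathsf{N}$ that designates the block to be reversed and complemented; the operator $\mathcal{I}$ applies the first operation of the strategy to the genome coordinate and shifts the strategy one step to the left, and $\mathcal{Y}_\mathsf{N}$ carries the obvious analogue of the metric $d$, whose integer part counts the positions at which two genomes differ. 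In particular any open ball of radius strictly less than $1$ is contained in a single genomic fibre $\{G\}\times\mathcal{I}_\mathsf{N}$.

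Second, I would extract the purely combinatorial obstruction hinted at just before the statement. Write $\mathcal{A} = \{1,4\}$ for the sub-alphabet $\{A,T\}$. Because $c(1)=4$ and $c(4)=1$, one has $c(\mathcal{A})=\mathcal{A}$, so the set $\mathcal{A}^\mathsf{N}$ of genomes written only over $\{A,T\}$ is forward-invariant under any inversion: reversing and complementing a block $(n_i,\dots,n_j)$ with every $n_k\in\mathcal{A}$ yields $(c(n_j),\dots,c(n_i))$, again with all entries in $\mathcal{A}$, while the positions outside the block are untouched. By induction $\mathcal{I}^k\big(\mathcal{A}^\mathsf{N}\times\mathcal{I}_\mathsf{N}\big)\subseteq\mathcal{A}^\mathsf{N}\times\mathcal{I}_\mathsf{N}$ for all $k\in\mathbb{N}$. (Symmetrically, since $c(2)=3$ and $c(3)=2$, the set of genomes containing at least one $C$ or $G$ is forward-invariant too; thus $\mathcal{Y}_\mathsf{N}$ splits into two nonempty closed invariant pieces, which already contradicts indecomposability, and hence transitivity, via the implication recalled before Definition~\ref{def10}.)

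Third, I would conclude directly. Fix $\varepsilon<1$, take $U = \mathcal{B}(X,\varepsilon)$ for any $X$ whose genome coordinate is the constant word $AA\cdots A=(1,\dots,1)\in\mathcal{A}^\mathsf{N}$, and $V=\mathcal{B}(\check X,\varepsilon)$ for any $\check X$ whose genome coordinate is $CC\cdots C=(2,\dots,2)\notin\mathcal{A}^\mathsf{N}$. Every point of $U$ lies in $\mathcal{A}^\mathsf{N}\times\mathcal{I}_\mathsf{N}$, hence so does $\mathcal{I}^k$ of it for every $k$, so the genome coordinate of any point of $\mathcal{I}^k(U)$ stays in $\mathcal{A}^\mathsf{N}$ and therefore differs from $(2,\dots,2)$ in every position; the distance from such a point to any point of $V$ is thus at least $\mathsf{N}\geqslant 1>\varepsilon$, so $\mathcal{I}^k(U)\cap V=\varnothing$ for all $k$. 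Hence $\mathcal{I}$ is not topologically transitive and, a fortiori, not chaotic according to Devaney on the set of genomes of size $\mathsf{N}$.

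I expect the only sensitive point to be the bookkeeping of the first paragraph rather than the argument itself: one must set things up so that ``a small ball fixes the genome'' is literally true --- which it is, because the genomic part of the metric is the discrete-style $d_G$ of Section~\ref{subsec:metric}, an integer --- and one should observe that the conclusion is independent of this choice, since it merely reflects the genuine fact that the monoid generated by inversions can never create a $C$ or a $G$ out of a word over $\{A,T\}$, so no faithful dynamical rendering of inversions can be transitive.
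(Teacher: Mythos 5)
Your argument is correct and rests on exactly the same obstruction the paper invokes: since $c(A)=T$ and $c(T)=A$, the set of genomes written over $\{A,T\}$ is invariant under inversions, so $AA\cdots A$ can never reach $CC\cdots C$ and transitivity fails. The paper states this in one informal sentence; you merely supply the explicit phase space, metric, and invariant-set bookkeeping that the paper leaves implicit, so the approaches coincide.
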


Let us finally investigate the dynamics of transposition inside genomes. 
Transposons are DNA sequences that can move into a given genome following a cut and paste mechanism of transposition:
\begin{flushleft}
$(n_1, \hdots, n_{i-1}, \underline{n_i, \hdots, n_j}, n_{j+1}, \hdots, n_k, n_{k+1}, \hdots, n_\mathsf{N}) \longrightarrow$
\end{flushleft}

\begin{flushright}
$(n_1, \hdots, n_{i-1}, n_{j+1}, \hdots, n_k, \underline{n_i, \hdots n_j}, n_{k+1},\hdots, n_\mathsf{N}).$
\end{flushright}

Obviously this transposition cannot fit the requirements of transitivity, as the number of 
adenines, thymines, guanines, and cytosines are preserved. Then, for instance, it is impossible
to join a genome with an high rate of thymine, starting transpositions on a genome with a
low rate of $T$. Thus,

\begin{proposition}
Transposition of transposons is not chaotic according to Devaney.
\end{proposition}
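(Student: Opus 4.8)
The plan is to mirror the argument already used for inversions: exhibit a \emph{conserved quantity} under transposition that obstructs topological transitivity. Transposition, as defined above, merely relocates a contiguous block $n_i,\dots,n_j$ elsewhere in the sequence; it permutes the positions of the nucleotides but never changes any nucleotide's identity. Consequently the multiset of symbols — equivalently, the four counts $(\#A,\#C,\#G,\#T)$ of adenine, cytosine, guanine, and thymine — is invariant under every transposition, and hence under every finite composition of transpositions.

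First I would fix $\mathsf{N}\in\mathbb{N}^*$ and formalize the phase space exactly as was done for mutations: a point is a pair $(G,T)$ where $G\in\ldbrack 1,4\rdbrack^\mathsf{N}$ is the current chromosome and $T$ is the (infinite) sequence of scheduled transposition moves, with the operator $\mathcal{T}$ applying the head move of $T$ to $G$ and shifting $T$. The same product metric $d=d_G+d_S$ from Section~\ref{subsec:metric} equips this space, and one checks (as for $\mathcal{M}$) that the counting vector of $\mathcal{T}^n(G,T)_1$ equals that of $G$ for all $n$. Then I would argue by contradiction: pick two points $X=(G,T)$ and $Y=(\check G,\check T)$ whose chromosomes have different symbol counts — e.g. $G=(1,1,\dots,1)$ (all adenine) and $\check G=(2,2,\dots,2)$ (all cytosine). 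Choose $\varepsilon<1$ so small that every point of the open ball $\mathcal B(Y,\varepsilon)$ has first coordinate exactly $\check G$ (possible because $d_G$ is the discrete-based metric, so $d(Z,Y)<1$ forces $Z_1=\check G$). Transitivity would demand some $n$ and some $Z\in\mathcal B(X,\varepsilon)$ — hence $Z_1=G$ — with $\mathcal{T}^n(Z)\in\mathcal B(Y,\varepsilon)$, hence $\mathcal{T}^n(Z)_1=\check G$. But $\mathcal{T}^n(Z)_1$ has the same symbol counts as $Z_1=G$, namely $(\mathsf{N},0,0,0)$, whereas $\check G$ has counts $(0,\mathsf{N},0,0)$ — a contradiction. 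Since transitivity is part of Devaney's definition, $\mathcal{T}$ is not chaotic.

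There is essentially no hard analytic content here; the only thing to be careful about is the bookkeeping. The main (mild) obstacle is making precise that the ball $\mathcal B(Y,\varepsilon)$ really does pin down the first coordinate — this uses that $d_G$ takes values in $\{0,1,\dots,\mathsf{N}\}$ and that $d_S\geq 0$, so $d(Z,Y)<1\Rightarrow d_G(Z_1,\check G)<1\Rightarrow d_G(Z_1,\check G)=0$. One should also note the edge case $\mathsf{N}=1$, where a single-nucleotide ``block'' transposition does nothing at all, so $\mathcal{T}$ is the identity on chromosomes and the claim is trivial; for $\mathsf{N}\geq 2$ the counting argument as above applies verbatim. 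I would present the proof in roughly three sentences: state the invariant, observe it is preserved under all iterates of $\mathcal{T}$, and conclude that a transitivity-witnessing pair with distinct counts cannot exist — exactly parallel to the sentence-long justification already given for inversions and transpositions in the discussion preceding the statement.
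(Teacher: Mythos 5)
Your proof is correct and follows exactly the paper's argument: the paper justifies this proposition in one sentence by observing that transpositions preserve the counts of each of the four nucleotides, so transitivity (and hence Devaney chaos) fails. You simply make the same invariant-based obstruction explicit with the phase-space and metric bookkeeping, which is a faithful elaboration rather than a different route.
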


\section{Conclusion}

In this document, the three operations of genomics rearrangement that
can be modeled by discrete dynamical systems (due to the preservation
of the size of the genomes) have been studied using mathematical 
topology. It has been stated that mutations are chaotic, whereas
transpositions and inversions are not. The proposed models lead to the 
feeling that genome evolution generates moderate chaos, and
that this evolution can probably be predicted to a certain extent.

This claim will be further investigated in our future work, by making
a larger and complete study of all the possible rearrangements into
genomes, measure and study their frequency using the related literature, 
and discussing to what extend this prediction can be realized. 
In particular, authors will study the set of mutations, transpositions,
and inversion strategies, to take into account for the presence of 
recombination hotspots.

\bibliographystyle{plain}
\bibliography{mabase}
\end{document}